        \def\theequation{\thesection.\arabic{equation}}
\newcommand{\tr}{{\rm tr}}
\newcommand{\mM}{{\mathcal M}}
\newcommand{\mR}{{\mathcal R}}
\newcommand{\mH}{{\mathcal H}}
\newcommand{\mO}{{\mathcal O}}
\newcommand{\mD}{{\mathcal D}}
\newcommand{\vf}{\varphi}
\newcommand{\al}{\alpha}
\newcommand{\vth}{\vartheta}
\newcommand{\Mat}{ {\rm Mat}(N,\mathbb C) }
\newcommand{\mC}{\mathbb C}
\newcommand{\mZ}{\mathbb Z}
\newcommand{\mS}{\mathcal S}
\newcommand{\z}{{\zeta}}
\newtheorem{lemma}{Lemma}
\newenvironment{proof}{\par\noindent{{\underline{\em Proof:}\ }}}{$\scriptstyle\blacksquare$}
\def\beq{\begin{equation}}
\def\eq{\end{equation}}
\def\p{\partial}
\newtheorem{theor}{Theorem}
\def\res{\mathop{\hbox{Res}}\limits}
\begin{document}

\setcounter{page}{1}

\begin{center}

\


{\Large{\bf  Non-ultralocal classical $r$-matrix structure }}

\vspace{3mm}

{\Large{\bf for 1+1 field analogue of elliptic Calogero-Moser model}}



 \vspace{12mm}

 {\Large {Andrei Zotov}}

  \vspace{10mm}

 {\em Steklov Mathematical Institute of Russian
Academy of Sciences,\\ Gubkina str. 8, 119991, Moscow, Russia}




 {\small\rm {e-mail: zotov@mi-ras.ru}}

\end{center}

\vspace{0mm}

\begin{abstract}
We consider 1+1 field generalization of the elliptic Calogero-Moser model. It is shown that the Lax connection
satisfies the classical non-ultralocal $r$-matrix structure of Maillet type.
Next, we consider 1+1 field analogue of the spin Calogero-Moser model and its multipole (or multispin) extension.
Finally, we discuss the field analogue of the classical IRF-Vertex correspondence, which relates utralocal and
non-ultralocal $r$-matrix structures.
\end{abstract}

\bigskip
%

\bigskip



\section{Introduction}\label{sec1}
\setcounter{equation}{0}
We consider 1+1 field analogues of the classical many-body integrable systems of Calogero-Moser type.
At the level of classical mechanics the spinless model describes interaction of $N$ particles on a complex plane
generated by the following Hamiltonian \cite{Calogero2}:
  \beq\label{e01}
  \begin{array}{c}
  \displaystyle{
H^{\hbox{\tiny{CM}}}=\sum\limits_{i=1}^N\frac{p_i^2}{2}-c^2\sum\limits_{i>j}^N\wp(q_i-q_j)\,,
 }
 \end{array}
 \eq
where $c\in\mC$ is a coupling constant and $\wp$ is the Weierstrass $\wp$-function (see Appendix A).
The positions of particles
$q_i\in\mC$ and momenta $p_i\in\mC$ are canonically conjugated:
  \beq\label{e02}
  \begin{array}{c}
  \displaystyle{
 \{p_i,q_j\}=\delta_{ij}\,,\quad \{p_i,p_j\}=\{q_i,q_j\}=0\,.
 }
 \end{array}
 \eq
 The Hamiltonian provide equations of motion ${\dot f}=\{H,f\}$, which are represented in the
 Lax form
  \beq\label{e03}
  \begin{array}{c}
  \displaystyle{
 \dot{L}(z)+[L(z),M(z)]=0\,,\qquad L(z),M(z)\in\Mat
 }
 \end{array}
 \eq
with spectral parameter $z$ \cite{Kr}. Then  $\tr L^k(z)$ are generating functions (in $z$) of conservation laws, and they are in involution with respect to the Poisson brackets (\ref{e02}) $\{\tr L^k(z), \tr L^m(w)\}=0$ due to existence
of the classical
$r$-matrix \cite{Skl3,BradenSuz}\footnote{Let us mention that for skew-symmetric $r$-matrices
(\ref{e04}) takes the form $\{L_1(z),L_2(w)\}=[L_1(z)+L_2(w),r_{12}(z,w)]$.
The classical $r$-matrices under consideration, interrelations between them and the corresponding Yang-Baxter equations are reviewed in the Appendix B.}:
  \beq\label{e04}
  \begin{array}{c}
  \displaystyle{
\{L_1(z),L_2(w)\}=[L_1(z),r_{12}(z,w)]-[L_2(w),r_{21}(w,z)]\,.
 }
 \end{array}
 \eq
The system (\ref{e01}) is extended to the spin Calogero-Moser model \cite{GH} for which we use the classical description suggested by E. Billey, J. Avan and O. Babelon \cite{BAB}. In this case the phase space has a component related to the classical spin
variables $S_{ij}$, $i,j=1,...,N$ arranged into the matrix $S\in\Mat$. This part of the phase space is described as
a quotient space ${\mathcal O}//{\mathfrak H}$, where ${\mathcal O}$ is a coadjoint orbit of ${\rm GL}_N$ Lie group
acting on the Lie coalgebra ${\rm gl}_N^*$,
 and ${\mathfrak H}\subset {\rm GL}_N$ is the Cartan subgroup of ${\rm GL}_N$. The spin model is easily defined
 on the unreduced phase space ${\mathcal O}$, where it is described by the Poisson-Lie brackets
  \beq\label{e05}
  \begin{array}{c}
  \displaystyle{
   \{S_{ij},S_{kl}\}=-S_{il}\delta_{kj}+S_{kj}\delta_{il}
 }
 \end{array}
 \eq
with the Casimir functions (eigenvalues of $S$) be fixed. Then the Lax equation (\ref{e03}) and
the classical $r$-matrix structure (\ref{e04}) acquire unwanted terms in their right-hand sides.
These unwanted terms vanish after performing the Hamiltonian reduction by the action of ${\mathfrak H}$.

Finally, the spin Calogero-Moser model can be generalized to the multispin case \cite{NN}. Then the
spin part of the phase space is given by $({\mathcal O}^1\times...\times {\mathcal O}^n)//{\mathfrak H}$.
The Lax equation and $r$-matrix structure is defined similarly to the spin case.
 There are also multipole (Gaudin type) models which have ''pure spin'' phase space.
 In the elliptic case they were introduced in \cite{STS}. An important property of these
 models is that they are described by non-dynamical classical $r$-matrix, while $r$-matrices for
 the Calogero-Moser model and its generalizations are dynamical, that is they depend explicitly on dynamical variables.
A review on the discussed above
elliptic integrable systems can be found in \cite{ZS,TrZ1}.

An integrable field analogue of a finite-dimensional integrable system is some integrable infinite-dimensional generalization of the finite-dimensional model. One possibility is to consider a large $N$ limit. This yields collective field theories of hydrodynamical type. In this paper we use a different approach. A set of
coordinates $\varphi_1,...,\varphi_n$ on the phase space of the finite-dimensional model
is replaced with a set of field variables $\varphi_1(x),...,\varphi_n(x)$ depending on $x$ -- real valued variable considered as a coordinate on a line or a circle. For definiteness we consider $\varphi_1(x),...,\varphi_n(x)$
to be periodic functions on a circle: $\varphi_i(x)=\varphi_i(x+2\pi)$. Dynamics provides also dependence on the time variable $t$, that is
we deal with the fields depending on the time variable $t$ and the space variable $x$.
 Integrable field analogues of the finite-dimensional many-body systems are known beginning from 2d Toda field theory
\cite{Mikh}. For example, the 2-body closed Toda model is described by the ordinary differential equation
${\ddot q}+\frac12\sinh(q)=0$, $q=q(t)$. Its field analogue is the Sinh–Gordon (partial differential) equation
 $\p_t^2q-\p_x^2q+\frac12\sinh(q)=0$, $q=q(t,x)$. Another example is the Euler top in three-dimensional space:
 ${\dot {\vec S}}={\vec S}\times J({\vec S})$, where $\vec S=\vec S(t)=(S_1(t),S_2(t),S_3(t))$ is
 the angular momentum vector. Its field analogue is given by the Landau-Lifshitz equation:
  \beq\label{e055}
  \begin{array}{c}
  \displaystyle{
   \p_t{ {\vec S}}={\vec S}\times\p^2_x{\vec S} +{\vec S}\times J({\vec S})\,,
   \qquad \vec S=\vec S(t,x)\,,
 }
 \end{array}
 \eq
where $\vec S$ is now the magnetization vector in 1-dimensional model of (ferro)magnet.

Integrability in the 1+1 field case is a subtle question. In this paper by integrability of the field theory we mean
a generalization of the finite-dimensional construction based on the Lax equation (\ref{e03})
and the classical $r$-matrix structure (\ref{e04}). Namely, we assume that the
equations of motion are
represented in the form of the zero curvature equation known as the Zakharov-Shabat equation
\cite{ZaSh,FT}:
 \beq\label{e09}
 \begin{array}{c}
  \displaystyle{
 \partial_{t}{U}(z)-k\partial_{x}{V}(z)+[{U}(z), {V}(z)]=0\,,\qquad {U}(z), {V}(z)\in\Mat\,,
  }
 \end{array}
\eq
 In the limit to the finite-dimensional case (when all the fields become independent of $x$) the term $\partial_{x}{V}(z)$ vanishes,
and (\ref{e09}) becomes the Lax equation (\ref{e03}), where $L(z)$ is the limit of $U(z)$,
while $M(z)$ coincides with the limit of $V(z)$ up to some expression commuting with $L(z)$.
The conservation laws in the field case are generated by $\tr(T^k(z,2\pi))$, where $T(z,x)\in\Mat$
is the monodromy matrix
 \beq\label{e10}
 \begin{array}{c}
  \displaystyle{
 T(z,x)={\rm Pexp}\Big( \frac{1}{k}\int\limits_0^x dy\,  U(z,y) \Big)\,.
  }
 \end{array}
\eq
The sufficient condition for the
Poisson commutativity
$\{\tr(T^k(z,2\pi)),\tr(T^m(w,2\pi))\}=0$ was suggested by
J.-M. Maillet \cite{Maillet} and is know as the non-ultralocal Maillet bracket:
  \beq\label{e11}
  \begin{array}{c}
  \displaystyle{
\{U_1(z,x),U_2(w,y)\}=
  }
  \\ \ \\
  \displaystyle{
\Big(-k\p_x {\bf r}_{12}(z,w|x)
+ [U_1(z,x),{\bf r}_{12}(z,w|x)]-[U_2(w,y),{\bf r}_{21}(w,z|x)]\Big)\delta(x-y)-
  }
  \\ \ \\
  \displaystyle{
  -\Big({\bf r}_{12}(z,w|x)+{\bf r}_{21}(w,z|x)\Big)k\delta'(x-y)\,.
 }
 \end{array}
 \eq

In contrast to the Toda model the field generalization of the Calogero-Moser model is highly
non-trivial (see also \cite{ZZ} for the field analogue of the Ruijsenaars-Schneider model) even at the level
of the Hamiltonian. It is of the form
 \beq\label{e06}
 \begin{array}{c}
  \displaystyle{
  \mH^{\hbox{\tiny{2dCM}}}=\int\limits_{0}^{2\pi} d x H^{\hbox{\tiny{2dCM}}}(x)
  }
 \end{array}
\eq
with the density
 \beq\label{e07}
 \begin{array}{c}
  \displaystyle{
H^{\hbox{\tiny{2dCM}}}(x)=-\sum\limits_{j=1}^N p_j^2(kq_{j,x}+c)+\frac{1}{Nc}\Big(\sum\limits_{j=1}^N p_j(kq_{j,x}+c)\Big)^2
+\frac14\sum\limits_{j=1}^N\frac{k^4 q_{j,xx}^2}{kq_{j,x}+c}+
  }
  \\
    \displaystyle{
    +\frac12\sum\limits_{i\neq j}^N\Big( (kq_{i,x}+c)^2(kq_{j,x}+c)+(kq_{i,x}+c)(kq_{j,x}+c)^2
    -ck^2(q_{i,x}-q_{j,x})^2  \Big)\wp(q_i-q_j)+
      }
  \\
    \displaystyle{
    +\frac{k^3}{2}\sum\limits_{i\neq j}^N\Big( q_{i,x}q_{j,xx}-q_{i,xx}q_{j,x} \Big)\zeta(q_i-q_j)\,,
    }
 \end{array}
\eq
where $q_{i,x}=\p_xq_i(x)$ and $q_{i,xx}=\p^2_xq_i(x)$, and the set of fields\footnote{Similarly to the finite-dimensional case we consider all the fields to be $\mC$-valued.} is given by $q_1(t,x),...,q_N(t,x)$ and
$p_1(t,x),...,p_N(t,x)$ with the
canonical Poisson brackets
 \beq\label{e08}
 \begin{array}{c}
  \displaystyle{
\{p_i(x),q_j(y)\}=\delta_{ij}\delta(x-y)\,,\qquad \{p_i(x),p_j(y)\}=\{q_i(x),q_j(y)\}=0\,.
  }
 \end{array}
\eq
 The expression (\ref{e07}) for the field theory Hamiltonian  generalizes the finite-dimensional Hamiltonian (\ref{e01}). In (\ref{e07})
 $k\in\mR$ is a free parameter\footnote{One can put $k=1$ but we keep it as it is. Its degree shows the total
 order of derivative with respect to $x$ in the corresponding term.}.  When all the fields are independent of $x$
one can put $k=0$, and (\ref{e06}) turns into (\ref{e01}) with the factor $-2c$.
The field analogue of the Calogero-Moser model (\ref{e07}) was obtained for 2-body case in \cite{Krich2}
and \cite{LOZ} and then extended to $N$-body case in \cite{Krich22}.
The canonical Poisson brackets (\ref{e08})
 and the Hamiltonian (\ref{e07}) provide equations of motion, which are
represented in the form of the zero curvature equation known as the Zakharov-Shabat equation (\ref{e09}),
see details in \cite{Krich2,LOZ,Krich22}. At the same time the classical $r$-matrix structure for the model
(\ref{e07}) was unknown, and this is one of the purposes of the paper.

Integrable 1+1 field theories are actively studied at the classical and quantum levels due to their deep relation to gauge theories, strings and conformal field theories, existence of soliton solutions and many other remarkable properties
\cite{CWY,Vicedo,LOZ22,Lacroix1,Lacroix2,Bykov,CSV}.
We hope to include into consideration a large class of models obtained as 1+1 field generalizations of the widely known finite-dimensional integrable systems of Calogero-Ruijsenaars family.

%
%
%

\paragraph{Purpose of the paper.}
%
The first result of the paper is that the field analogue of the
Calogero-Moser model (\ref{e07}) is described by the Maillet $r$-matrix structure
(\ref{e11}) with the classical $r$-matrix as in the finite-dimensional case \cite{Skl3,BradenSuz} but the positions
of particles $q_i$ become the fields $q_i(x)$.

Next, we discuss the 1+1 field generalizations of the spin Calogero-Moser model.
 There are at least three different ways to obtain 1+1 field analogues of the finite-dimensional integrable systems:
 from elliptic families of solutions to integrable hierarchies of KP or 2d Toda types,
 by taking a continuous limit from the classical integrable chain \cite{FT} and by the group-theoretical approach in the infinite-dimensional case. The spinless model (\ref{e07}) can be obtained using all three ways. The first
 method was used in \cite{Krich2,Krich22,ZZ}, the second one in \cite{ZZ} and the third one in \cite{LOZ,LOZ22}.
 In the last one approach based on the affine Higgs bundles
 the Lax matrix (the $U$-matrix) for the 1+1 spin (and multispin) Calogero-Moser
 model was derived. We use it to show that the spin model is also described by the $r$-matrix
 structure (\ref{e11}). Similarly to the finite-dimensional case its description includes
 an intermediate step. For the unreduced model the non-ultralocal term is absent but the right-hand side
 of (\ref{e11}) contains some unwanted term. The additional reduction removes the unwanted term but
 possibly adds the non-ultralocal term.

At the level of finite-dimensional models a wide class of models (including certain integrable tops, Gaudin models
and spin chains) are governed by
non-dynamical skew-symmetric $r$-matrices such as the Belavin-Drinfeld elliptic $r$-matrix \cite{BD}
and its trigonometric and rational degenerations. These are the ''pure spin'' models (without many-body degrees of freedom). Their 1+1 field generalizations
are the models of the Landau-Lifshitz-Heisenberg type \cite{Skl}, its higher rank versions \cite{GS,AtZ2}, and multi pole
generalizations known as the 1+1 (affine) Gaudin models \cite{Z11,Vicedo,LOZ22,Lacroix1}. The classical $r$-matrix structure
(\ref{e11}) is simplified since the derivative term ($\p_x {\bf r}_{12}$) and the non-ultralocal term
(the one with $\delta'$) are absent in these cases. The non-ultralocal models
are more complicated \cite{Bordemann,Bazhanov}. In the general case $r$-matrices are not skew-symmetric and depend on dynamical variables. In quantum statistical models the relation between certain dynamical and non-dynamical $R$-matrices was formulated by R.J. Baxter \cite{Baxter2} and is know as the IRF-Vertex correspondence.
Similar relation exists for the finite-dimensional integrable systems \cite{LOZ,ZS}, and it can be extended
to 1+1 field theories.  At the level of Lax pairs the IRF-Vertex transformation acts by some special gauge transformation\footnote{Similar relations for other type models are known from \cite{ZaTa,Feher}.}
 \beq\label{e111}
 \begin{array}{c}
  \displaystyle{
 U(z,x)\rightarrow G(z,x)U(z,x)G^{-1}(z,x)+k\p_xG(z,x)G^{-1}(z,x)\,,
  }
 \end{array}
\eq
which was treated in \cite{LOZ} as a modification of the affine Higgs  bundle.
This establishes gauge equivalence between the 1+1 Calogero-Moser model (\ref{e07}) and some
special Landau-Lifshitz type model. Examples can be found in \cite{AtZ1,AtZ2,AtZ3}.
The Landau-Lifshitz type model is described by ultralocal and non-dynamical $r$-matrix
of the Belavin-Drinfeld type. We will argue that the gauge transformation (\ref{e111})
given by the intertwiner matrix of the IRF-Vertex correspondence
provides the same relation between dynamical and non-dynamical $r$-matrices as
in the finite-dimensional case. In this way we show that the IRF-Vertex transformation
maps the 1+1 field Calogero-Moser model into the form with ultralocal $r$-matrix structure (related to Landau-Lifshitz model).


The paper is organized as follows. In the next Section we recall construction of the classical $r$-matrix
structures for the Calogero-Moser model and its spin (and multispin) generalization. In Section
\ref{sec3} we present field analogues of the finite-dimensional constructions from Section \ref{sec2}.
In Section \ref{sec4} the field version of the classical IRF-Vertex relation is discussed.
Elliptic functions definitions and properties are given in the Appendix A. A brief review of
$R$-matrices under consideration and interrelations between them including the IRF-Vertex relations
can be found in the Appendix B.

\section{Classical mechanics}\label{sec2}
\setcounter{equation}{0}

Here we review the classical $r$-matrix structures for the spinless and spin
Calogero-Moser models in the finite-dimensional case.

\subsection{Spinless Calogero-Moser model}
The Lax representation with spectral parameter (\ref{e03}) for the spinless elliptic Calogero-Moser model
is given by the pair of $N\times N$ matrices \cite{Kr}:
  \beq\label{e21}
  \begin{array}{c}
  \displaystyle{
L^{\hbox{\tiny{CM}}}_{ij}(z)=\delta_{ij}(p_i+{c} E_1(z))+{c}(1-\delta_{ij})\phi(z,q_{ij})\,,\quad q_{ij}=q_i-q_j\,,
 }
 \end{array}
 \eq
  \beq\label{e22}
  \begin{array}{c}
  \displaystyle{
M^{\hbox{\tiny{CM}}}_{ij}(z)=-{c}\delta_{ij} d_i
-{c}(1-\delta_{ij})f(z,q_{ij})\,,\quad d_i=\sum\limits_{k:k\neq i}^N
\wp(q_{ik})\,,
 }
 \end{array}
 \eq
where $\phi(z,q)$ is the elliptic Kronecker function (\ref{a01}), and the functions $f(z,q)$, $E_1(z)$
are defined in (\ref{a04})-(\ref{a05}). Equations of motion
  \beq\label{e23}
  \begin{array}{c}
  \displaystyle{
 {\dot q}_i=p_i\,,\qquad  {\dot p}_i={c}^2\sum\limits_{k:k\neq
 i}^N\wp'(q_{ik})\,,\quad i=1,...,N
 }
 \end{array}
 \eq
coming from (\ref{e01})-(\ref{e02}) are reproduced\footnote{Equations (\ref{e23}) are reproduced from
from the Lax equations (\ref{e03}) up to a transformation $p_i\rightarrow p_i+{\rm const}$ since it corresponds to $L^{\hbox{\tiny{CM}}}(z)\rightarrow L^{\hbox{\tiny{CM}}}(z)+{\rm const}1_N$ ($1_N$ -- the identity $N\times N$ matrix), and ${\rm const}1_N$ is cancelled out from both sides of
(\ref{e03}).} from the Lax equations (\ref{e03}). For the proof one should use the identity (\ref{a11}) for the diagonal part of (\ref{e03}), and the identity (\ref{a09}) for its off-diagonal part.
The Hamiltonian (\ref{e01}) comes from the computation of $\tr((L^{\hbox{\tiny{CM}}}(z))^2)/2$ through
the identity (\ref{a10}).

The Poisson commutativity of higher Calogero-Moser
 Hamiltonians follows from the existence of the classical $r$-matrix satisfying
(\ref{e04}) for the Lax matrix (\ref{e21}).
In (\ref{e04}) the standard tensor notations are used: $L_1=L\otimes 1_N$, $L_2=1_N\otimes L$, where $1_N$ -- the identity
 $N\times N$ matrix. The permutation of $r$-matrix indices $12$ or $21$ means the permutation of the tensor components.
  \beq\label{e24}
  \begin{array}{c}
  \displaystyle{
 r_{12}(z,w)=\sum\limits_{i,j,k,l=1}^N E_{ij}\otimes E_{kl}\, r_{ij,kl}(z,w)\,,\qquad
 r_{21}(z,w)=\sum\limits_{i,j,k,l=1}^N E_{kl}\otimes E_{ij}\, r_{ij,kl}(z,w)\,,
 }
 \end{array}
 \eq
where $E_{ij}$ is the basis matrix units in $\Mat$, that is $(E_{ij})_{ab}=\delta_{ia}\delta_{jb}$.
The left-hand side of (\ref{e04}), by definition, is
  \beq\label{r05}
  \begin{array}{c}
  \displaystyle{
\{L_1(z),L_2(w)\}=\sum\limits_{i,j,k,l=1}^N \{L_{ij}(z),L_{kl}(w)\}\, E_{ij}\otimes
E_{kl}\,.
 }
 \end{array}
 \eq
Explicit expression for the classical $r$-matrix of the spinless Calogero-Moser model was suggested in
\cite{Skl3,BradenSuz}:
\beq\label{e26}
  \begin{array}{c}
  \displaystyle{
 r^{\hbox{\tiny{CM}}}_{12}(z,w)=
  }
  \\ \ \\
   \displaystyle{
 =
 (E_1(z-w)+E_1(w))\,\sum\limits_{i=1}^NE_{ii}\otimes E_{ii}+
 \sum\limits^N_{i\neq j}\phi(z-w,q_{ij})\,E_{ij}\otimes E_{ji}
 -\sum\limits^N_{i\neq j}\phi(-w,q_{ij})\,E_{ii}\otimes E_{ji}\,.
 }
 \end{array}
 \eq
Verification of this statement is by direct calculation. For example, the
left-hand side of (\ref{e04}) is
 \beq\label{e27}
  \begin{array}{c}
  \displaystyle{
\{L_{ij}(z),L_{kl}(w)\}=\delta_{ij}(1-\delta_{kl})(\delta_{ik}-\delta_{il})f(w,q_{kl})-
\delta_{kl}(1-\delta_{ij})(\delta_{ki}-\delta_{kj})f(z,q_{ij})\,,
 }
 \end{array}
 \eq
 or
  \beq\label{e28}
  \begin{array}{c}
  \displaystyle{
\{L_1(z),L_2(w)\}=\sum\limits^N_{i\neq j} E_{ii}\otimes E_{ij}\, f(w,q_{ij})-\sum\limits^N_{i\neq j}
E_{jj}\otimes E_{ij}\, f(w,q_{ij})-
 }
\\
  \displaystyle{
-\sum\limits^N_{i\neq j} E_{ij}\otimes E_{ii}\, f(z,q_{ij})+\sum\limits^N_{i\neq j} E_{ij}\otimes E_{jj}\,
f(z,q_{ij})\,.
 }
 \end{array}
 \eq
 The right-hand side of (\ref{e04}) is simplified through (\ref{a07}) and (\ref{a08}).

\subsection{Spin and multispin Calogero-Moser model}

\paragraph{Classical spin Calogero-Moser model: Lax equations.} The spin Calogero-Moser model was first introduced
at quantum level \cite{GH}. Its classical analogue was described in \cite{BAB} and we follow this description below.

Besides the
many-body degrees of freedom (\ref{e02}) its phase space has a ''classical spin'' component. The dynamical
spin variables $S_{ij}$, $i,j=1,...,N$ (for ${\rm gl}_N$ case) are naturally arranged into a matrix $S\in\Mat$.
The pair of $N\times N$ matrices
  \beq\label{e29}
  \begin{array}{c}
  \displaystyle{
L^{\hbox{\tiny{spin}}}_{ij}(z)=\delta_{ij}(p_i+S_{ii}E_1(z))+(1-\delta_{ij})S_{ij}\phi(z,q_{ij})\,,
 }
 \end{array}
 \eq
  \beq\label{e30}
  \begin{array}{c}
  \displaystyle{
  M^{\hbox{\tiny{spin}}}_{ij}(z)=-(1-\delta_{ij})S_{ij}f(z,q_i-q_j)
 }
 \end{array}
 \eq
does not satisfy the Lax equation (\ref{e03}) but satisfies the Lax equation with ''unwanted term'' in the right
hand side:
  \beq\label{e31}
  \begin{array}{c}
  \displaystyle{
 \dot{L}^{\hbox{\tiny{spin}}}(z)+[L^{\hbox{\tiny{spin}}}(z),M^{\hbox{\tiny{spin}}}(z)]=-\sum\limits_{i,j=1}^N E_{ij}\,(S_{ii}-S_{jj})S_{ij}E_1(z)f(z,q_{ij})\,.
 }
 \end{array}
 \eq
It provides equations of motion
  \beq\label{e32}
  \begin{array}{c}
  \displaystyle{
{\dot q}_i=p_i\,,\quad {\dot p}_i=\sum\limits_{j:j\neq
i}^NS_{ij}S_{ji}\wp'(q_i-q_j)
 }
 \end{array}
 \eq
and
  \beq\label{e33}
  \begin{array}{c}
  \displaystyle{
{\dot S}_{ii}=0\,,\quad {\dot S}_{ij}=\sum\limits_{k\neq i,j}^N
S_{ik}S_{kj}(\wp(q_i-q_k)-\wp(q_j-q_k))\,,\ i\neq j\,.
 }
 \end{array}
 \eq
These equations are Hamiltonian with the Hamiltonian function
  \beq\label{e34}
  \begin{array}{c}
  \displaystyle{
H^{\hbox{\tiny{spin}}}=\sum\limits_{i=1}^N\frac{p_i^2}{2}-\sum\limits_{i>j}^N S_{ij}S_{ji}\wp(q_i-q_j)\,.
 }
 \end{array}
 \eq
 The Poisson brackets for the spin variables are given by the Poisson-Lie structure (\ref{e05})
 on the Lie coalgebra ${\rm gl}_N^*$. Eigenvalues of matrix $S$ are the Casimir functions of
 (\ref{e05}), so that we may restrict ourselves on a coadjoint orbit $\mO$, and the level
 of the fixed eigenvalues defines its dimension. For example, in the generic case (when all eigenvalues are
 pairwise distinct) the dimension is maximal ${\rm dim}\mO^{\hbox{\tiny{max}}}=N(N-1)$, and
the orbit of minimal dimension
  \beq\label{e35}
  \begin{array}{c}
  \displaystyle{
{\rm dim}\mO^{\hbox{\tiny{min}}}=2(N-1)
 }
 \end{array}
 \eq
 corresponds
to $N-1$ coincident eigenvalues of $S$.

The unwanted term in (\ref{e31}) vanishes on the constraints
  \beq\label{e36}
  \begin{array}{c}
  \displaystyle{
 S_{ii}={c}={\rm const}\,,\quad i=1,...,N\,.
 }
 \end{array}
 \eq
They are in the agreement with the equations ${\dot S}_{ii}=0$ from (\ref{e33}). These are the first class constrains
generated by the coadgoint action of the Cartan subgroup ${\mathfrak H}\subset{\rm GL}_N$
 and
they should be supplemented by another\footnote{The number of the first class constraints
(\ref{e36}) is $N-1$ since $\tr S$ is the Casimir function.} $N-1$ gauge fixing conditions $\varsigma_k$, $k=1,...,N-1$. Finally, we have $2N-2$ second class constraints $\chi_a$ including (\ref{e36}) and some $\varsigma_k$.
Then one can use the Dirac formula\footnote{In (\ref{e37}) $\chi$ is a $2N-2$ dimensional row of constraints,
$\chi^T$ -- the corresponding column,
$||\{\chi_a,\chi_b\} ||^{-1}$ is the inverse of matrix between constraints and $f_{1,2}$ is a pair of functions
on $\mM^{\hbox{\tiny{spin}}}$.}
  \beq\label{e37}
  \begin{array}{c}
  \displaystyle{
  \{f_1,f_2\}\Big|_{\rm red}=\Big( \{f_1,f_2\}-\{f_1,\chi\}||\{\chi_a,\chi_b\} ||^{-1}\{\chi^T,f_2\} \Big)\Big|_{\rm on\ shell}
 }
 \end{array}
 \eq
to calculate the Poisson brackets on the reduced phase space
$\mM^{\hbox{\tiny{spin}}}=\mO//{\mathfrak H}$ -- the spin component of the phase space, which dimension is equal to
  \beq\label{e38}
  \begin{array}{c}
  \displaystyle{
{\rm dim}\mM^{\hbox{\tiny{spin}}}={\rm dim}\mO-2(N-1)\,.
 }
 \end{array}
 \eq
The resultant brackets depend on the choice of the gauge fixation $\varsigma_k$, and this effects
the equations of motion since the Hamiltonian equations ${\dot f}=\{H,f\}$ are now defined by the Poisson
brackets  (\ref{e37}) instead of (\ref{e05}) for the unreduced model.
At the same time,
regardless of the choice of gauge fixing conditions the equation (\ref{e31}) turns under reduction into the
 Lax equation (\ref{e03}).

\paragraph{Spin Calogero-Moser model: classical $r$-matrix.}
Let us define the classical $r$-matrix on the unreduced phase
space.  For the unreduced Lax matrix (\ref{e29})
and the set of Poisson brackets (\ref{e02}), (\ref{e05}) we have
the classical exchange relations with unwanted term
  \beq\label{e39}
  \begin{array}{c}
  \displaystyle{
\{L_1^{\hbox{\tiny{spin}}}(z),L_2^{\hbox{\tiny{spin}}}(w)\}=
[L^{\hbox{\tiny{spin}}}_1(z)+L^{\hbox{\tiny{spin}}}_2(w),r^{\hbox{\tiny{spin}}}_{12}(z-w)]-
  }
  \\ \ \\
  \displaystyle{
-\sum\limits_{k\neq l}^N E_{kl}\otimes E_{lk}\Big(S_{kk}-S_{ll}\Big)f(z-w,q_k-q_l)\,,
 }
 \end{array}
 \eq
 where
\beq\label{e40}
  \begin{array}{c}
  \displaystyle{
 r^{\hbox{\tiny{spin}}}_{12}(z,w)=r^{\hbox{\tiny{spin}}}_{12}(z-w)=
 E_1(z-w)\,\sum\limits_{i=1}^NE_{ii}\otimes E_{ii}+
 \sum\limits^N_{i\neq j}\phi(z-w,q_{ij})\,E_{ij}\otimes E_{ji}\,.
 }
 \end{array}
 \eq
 Due to (\ref{a01}) and (\ref{a06}) this classical $r$-matrix is skew-symmetric\footnote{Due to skew-symmetry the term $-[L_2(w),r_{21}(w,z)]$
 in (\ref{e04}) is written as $[L_2(w),r_{12}(z,w)]$ in (\ref{e39}).}
\beq\label{e41}
  \begin{array}{c}
  \displaystyle{
 r^{\hbox{\tiny{spin}}}_{12}(z,w)=-r^{\hbox{\tiny{spin}}}_{21}(w,z)\,.
 }
 \end{array}
 \eq
 The reduction with respect to the coadjoint action of ${\mathfrak H}$ kills the unwanted term.
  More precisely, we may restrict ourselves to the level of the first class constraints (\ref{e36}). Then
  the involution property $\{\tr((L^{\hbox{\tiny{spin}}}(z))^k),\tr((L^{\hbox{\tiny{spin}}}(w))^m)\}=0$
  holds true since these are the functions invariant with respect to the coadjoint action. This commutativity
  is independent of the gauge fixing conditions $\varsigma_k$. However, the Lax matrices themselves are
  not invariant functions. Thus, in order to have some explicit
  expression for the classical $r$-matrix on the reduced phase space one should fix $\varsigma_k$ and perform
  either Hamiltonian reduction or the Poisson reduction (\ref{e37}).
  It was argued in \cite{BDOZ}
 that the Poisson reduction via the Dirac bracket (\ref{e37}) saves the $r$-matrix form.


\paragraph{Reduction to the spinless model.} Following \cite{BAB} let us show
how the described reduction provides the spinless Calogero-Moser model from the spin one.
Consider the unreduced spin model in the case of the minimal coadjoint orbit (\ref{e35}).
Then due to (\ref{e38}) ${\rm dim}\mM^{\hbox{\tiny{spin}}}=0$, i.e. the reduction
kills all spin degrees of freedom. Put it differently, the spinless Calogero-Moser model
corresponds to the spin model with the minimal orbit.

In the minimal orbit
case
$N-1$ eigenvalues of the matrix $S$ coincide. Without loss of generality we put them be equal to zero.
Therefore, $S$ is a rank one matrix
\beq\label{e44}
  \begin{array}{c}
  \displaystyle{
 S_{ij}=\xi_i\eta_j\,.
 }
 \end{array}
 \eq
The Poisson-Lie brackets (\ref{e05}) on ${\rm gl}_N^*$ are naturally parameterized by the canonical brackets
\footnote{The parameterization (\ref{e44})-(\ref{e45}) is a particular case of $M=1$ in the quiver like
parameterization $S_{ij}=\sum_{a=1}^M\xi^a_i\eta^a_j$ with $\{\xi^a_i,\eta^b_j\}=\delta_{ij}\delta^{ab}$
describing embedding of ${\rm gl}_N^*$ into $T^*\mC^{NM}$.}
\beq\label{e45}
  \begin{array}{c}
  \displaystyle{
 \{\xi_i,\eta_j\}=\delta_{ij}\,.
 }
 \end{array}
 \eq
Perform the gauge transformation
  \beq\label{e381}
  \begin{array}{c}
  \displaystyle{
L(z)\rightarrow g^{-1}(z)L(z)g(z)\,,\qquad
M(z)\rightarrow g^{-1}(z)M(z)g(z)-g^{-1}(z){\dot g}(z)
 }
 \end{array}
 \eq
 with
\beq\label{e46}
  \begin{array}{c}
  \displaystyle{
 g={\rm diag}(\xi_1,...,\xi_N)\,.
 }
 \end{array}
 \eq
 Then the matrix (\ref{e44}) transforms as $S_{ij}\rightarrow \xi_j\eta_j=S_{jj}$, and after restriction to
 the constraints (\ref{e36}) we get the Lax matrix of the spinless Calogero-Moser model (\ref{e21}).

 At the level of $r$-matrix a gauge transformation acts as
\beq\label{e42}
  \begin{array}{c}
  \displaystyle{
 r_{12}(z,w)\longrightarrow
 }
 \end{array}
 \eq
 $$
 g_1^{-1}(z)g_2^{-1}(w)\Big(
 r_{12}(z,w)+\{g_1(z),L_2(w)\}g_1^{-1}-\frac12\,\Big[\{g_1(z),g_2(w)\}g_1^{-1}(z)g_2^{-1}(w),L_2(w)\Big]
 \Big)g_1(z)g_2(w)\,.
 $$
 In our case $g$ is independent of spectral parameter. Also,
  $\{g_1,g_2\}=0$.
 Therefore, the gauge transformation maps the unreduced $r$-matrix (\ref{e40}) to
\beq\label{e43}
  \begin{array}{c}
  \displaystyle{
 r^{\hbox{\tiny{spin}}}_{12}(z-w)\longrightarrow
 g_1^{-1}g_2^{-1}r^{\hbox{\tiny{spin}}}_{12}(z-w)g_1g_2+g_1^{-1}g_2^{-1}\{g_1,L_2(w)\}g_2\,.
 }
 \end{array}
 \eq
 It is easy to see from the upper expression that while the unreduced $r$-matrix is skew-symmetric and depends on the
 difference of spectral parameters, the reduced $r$-matrix looses these properties.

The result of the gauge transformation (\ref{e46}) is easy to calculate.
 Since
\beq\label{e47}
  \begin{array}{c}
  \displaystyle{
 g_1^{-1}g_2^{-1}r^{\hbox{\tiny{spin}}}_{12}(z-w)g_1g_2=r^{\hbox{\tiny{spin}}}_{12}(z-w)
 }
 \end{array}
 \eq
and
\beq\label{e48}
  \begin{array}{c}
  \displaystyle{
 g_1^{-1}g_2^{-1}\{g_1,L_2(w)\}g_2=
   }
  \\
  \displaystyle{
 =\sum\limits_{a,b=1}^N E_{aa}\otimes E_{bb}\,
 E_1(w)\{\xi_a,\xi_b\eta_b\}\xi_b^{-1}+
 \sum\limits_{a;\, b\neq c}^N E_{aa}\otimes E_{bc}\,\phi(w,q_{bc})\{\xi_a,\xi_b\eta_c\}\xi_a^{-1}\xi_b^{-1}\xi_c^{-1}
 }
  \\
  \displaystyle{
 =E_1(w)\sum\limits_{a=1}^N E_{aa}\otimes E_{aa}\,
 +
 \sum\limits_{a,b=1}^N E_{aa}\otimes E_{ba}\,\phi(w,q_{ba})=
 }
  \\
  \displaystyle{
 =
 E_1(w)\sum\limits_{a=1}^N E_{aa}\otimes E_{aa}\,
 -
 \sum\limits_{a,b=1}^N E_{aa}\otimes E_{ba}\,\phi(-w,q_{ab})\,,
 }
 \end{array}
 \eq
 one gets the $r$-matrix (\ref{e26}).

\paragraph{Nekrasov's multispin Calogero-Moser model.}
The multinspin extension of the spin Calogero-Moser model was introduced in \cite{NN}.
At the level of the Lax matrix (\ref{e29}) for unreduced model this is a generalization to
multi-pole case:
  \beq\label{e49}
  \begin{array}{c}
  \displaystyle{
L^{\hbox{\tiny{spins}}}_{ij}(z)=\delta_{ij}\Big(p_i+\sum\limits_{a=1}^M S^a_{ii}E_1(z-z_a)\Big)+(1-\delta_{ij})\sum\limits_{a=1}^M S^a_{ij}\phi(z-z_a,q_{ij})\,,
 }
 \end{array}
 \eq
The spin part of the phase space is given by the set of variables $S^a_{ij}$, $i,j=1,...,N$, $a=1,...,M$,
and
  \beq\label{e50}
  \begin{array}{c}
  \displaystyle{
\res\limits_{z=z_a}L^{\hbox{\tiny{spins}}}(z)=S^a\in\Mat\,,\quad a=1,...,M
 }
 \end{array}
 \eq
This space is equipped with the Poisson structure being a direct sum of (\ref{e05}):
  \beq\label{e51}
  \begin{array}{c}
  \displaystyle{
\{S^a_{ij}, S^b_{kl}\} = \delta^{ab} ( - S^a_{il}\delta_{kj}+S^a_{kj}\delta_{il} ) \,.
 }
 \end{array}
 \eq
By fixing the levels of the Casimir functions (eigenvalues of all $S^a$) the phase space
takes the form of a direct product of coadjoint orbits $\mO^1\times...\times \mO^M$.
The classical $r$-matrix structure for this unreduced models is similar to (\ref{e39}):
  \beq\label{e52}
  \begin{array}{c}
  \displaystyle{
\{L_1^{\hbox{\tiny{spins}}}(z),L_2^{\hbox{\tiny{spins}}}(w)\}=
[L^{\hbox{\tiny{spins}}}_1(z)+L^{\hbox{\tiny{spins}}}_2(w),r^{\hbox{\tiny{spin}}}_{12}(z-w)]+
  }
  \\ \ \\
  \displaystyle{
-\sum\limits_{k\neq l}^N\sum\limits_{a=1}^M E_{kl}\otimes E_{lk}\Big(S^a_{kk}-S^a_{ll}\Big)f(z-w,q_k-q_l)
 }
 \end{array}
 \eq
 with the same $r$-matrix (\ref{e40}). The Cartan subgroup $\mathfrak H$ acts on all $S^a$ simultaneously,
 thus providing the set of (moment map) constraints:
  \beq\label{e53}
  \begin{array}{c}
  \displaystyle{
 \sum\limits_{a=1}^M S^a_{kk}={\rm const}\,,\quad k=1,...,N\,.
 }
 \end{array}
 \eq
Similarly to the spin Calogero-Moser model one should supply
(\ref{e53}) with some gauge fixation and perform the reduction to the space
$(\mO^1\times...\times \mO^M)//{\mathfrak H}$.

\section{1+1 field theories}\label{sec3}
\setcounter{equation}{0}

Let us describe the field analogues for the models from the previous Section.

\subsection{1+1 Calogero-Moser field theory}

\paragraph{Equations of motion.} Following \cite{Krich22} let us write down equations of motion.
We use slightly different notations, and some signs differ from those in \cite{Krich22}.
 The Hamiltonian (\ref{e06})-(\ref{e07})
and the Poisson brackets (\ref{e08}) provides the following equations:
  \beq\label{e54}
  \begin{array}{c}
  \displaystyle{
 {\dot q}_i\equiv\{H^{\hbox{\tiny{2dCM}}},q_i\}=
 -2(kq_{i,x}+c)\Big(p_i-\frac{1}{Nc}\sum\limits_{j=1}^Np_j(kq_{j,x}+c)\Big)\,,
 }
 \end{array}
 \eq
  \beq\label{e55}
  \begin{array}{c}
  \displaystyle{
 {\dot p}_i\equiv\{H^{\hbox{\tiny{2dCM}}},p_i\}=
 -k\p_x\bigg( p_i^2-2p_i\frac{1}{Nc}\sum\limits_{j=1}^Np_j(kq_{j,x}+c)+
 \frac{1}{2}\frac{k^3 q_{i,xxx}}{kq_{i,x}+c}-\frac{1}{4}\frac{k^4 q^2_{i,xx}}{(kq_{i,x}+c)^2} \bigg)-
  }
  \\
  \displaystyle{
-2\sum\limits_{j:j\neq i}^N \bigg(
 (kq_{j,x}+c)^3\wp'(q_{ij})-3k^2(kq_{j,x}+c)q_{j,xx}\wp(q_{ij})-k^3q_{j,xxx}\zeta(q_{ij})
 \bigg)\,.
 }
 \end{array}
 \eq
 Introduce notations
  \beq\label{e56}
  \begin{array}{c}
  \displaystyle{
 \al_i=(kq_{i,x}+c)^{1/2}\,, \quad i=1,...,N
 }
 \end{array}
 \eq
and
  \beq\label{e57}
  \begin{array}{c}
  \displaystyle{
  \kappa=-\frac{1}{Nc}\sum\limits_{j=1}^Np_j(kq_{j,x}+c)=-\frac{1}{Nc}\sum\limits_{j=1}^Np_j\al_j^2\,.
 }
 \end{array}
 \eq
Then the Hamiltonian (\ref{e07}) and the equations of motion (\ref{e54})-(\ref{e55}) are written in a more
compact form:
 \beq\label{e58}
 \begin{array}{c}
  \displaystyle{
H^{\hbox{\tiny{2dCM}}}(x)=-\sum\limits_{i=1}^N p_i^2\al_i^2+Nc\kappa^2
+\sum\limits_{i=1}^N k^2\al_{i,x}^2+
  }
  \\
  \displaystyle{
+\frac{k}{2}\sum\limits_{i\neq j}^N\Big( \al_i\al_{j,x}-\al_j\al_{i,x}+c(\al_{i,x}-\al_{j,x}) \Big)\zeta(q_{ij})
+\frac{1}{2}\sum\limits_{i\neq j}^N\Big( \al_i^4\al_j^2+\al_i^2\al_j^4-c(\al_i^2-\al_j^2)^2 \Big)\wp(q_{ij})
 }
 \end{array}
 \eq
and
  \beq\label{e59}
  \begin{array}{c}
  \displaystyle{
 {\dot q}_i=-2\al_i^2(p_i+\kappa)\,,
 }
 \end{array}
 \eq
  \beq\label{e60}
  \begin{array}{c}
  \displaystyle{
 {\dot p}_i=-k\p_x\Big( p_i^2+2\kappa p_i+k^2\frac{\al_{i,xx}}{\al_i} \Big)
 -2\sum\limits_{j:j\neq i}^N
 \Big( \al_j^6\wp'(q_{ij})-6\al_j^3\al_{j,x}\wp(q_{ij})-k^2\p_x^2(\al_j^2)\zeta(q_{ij}) \Big)\,.
  }
 \end{array}
 \eq
Notice that the subspace $q_1(t,x)+...+q_N(t,x)={\rm const}$ is invariant with respect to dynamics.
For simplicity we choose the constant to be zero:
  \beq\label{e601}
  \begin{array}{c}
  \displaystyle{
 \sum\limits_{j=1}^N q_j=0\,.
  }
 \end{array}
 \eq
 At the same time the sum of the momenta fields $p_j(t,x)$ is not a constant.

\paragraph{Zakharov-Shabat equation.}  It was shown in \cite{Krich22} that
the equations of motion are represented in the form of the Zakharov-Shabat equations (\ref{e09}).
Introduce the $U$-$V$ pair:
  \beq\label{e61}
  \begin{array}{c}
  \displaystyle{
 U^{\hbox{\tiny{2dCM}}}_{ij}(z)=\delta_{ij}
 \Big( p_i+\al_i^2E_1(z)-k\frac{\al_{i,x}}{\al_i} \Big)
 +(1-\delta_{ij})\phi(z,q_i-q_j)\al_j^2\,,
  }
 \end{array}
 \eq
  \beq\label{e62}
  \begin{array}{c}
  \displaystyle{
 V^{\hbox{\tiny{2dCM}}}_{ij}(z)=\delta_{ij}
 \Big( q_{i,t}E_1(z)-Nc\al_i^2\wp(z)-m_i^0-\frac{\al_{i,t}}{\al_i} \Big)+
   }
  \\ \ \\
  \displaystyle{
 +(1-\delta_{ij})\Big(Ncf(z,q_i-q_j)-NcE_1(z)\phi(z,q_i-q_j)-m_{ij}\phi(z,q_i-q_j)\Big)\al_j^2\,,
  }
 \end{array}
 \eq
 where
  \beq\label{e63}
  \begin{array}{c}
  \displaystyle{
 m_i^0=p_i^2+2\kappa p_i+k^2\frac{\al_{i,xx}}{\al_i}-
 \sum\limits_{j:j\neq i}^N\Big(
 (2\al_j^4+\al_i^2\al_j^2)\wp(q_i-q_j)+4k\al_j\al_{j,x}\zeta(q_i-q_j)
 \Big)\,,
  }
 \end{array}
 \eq
 and
  \beq\label{e64}
  \begin{array}{c}
  \displaystyle{
 m_{ij}=p_i+p_j+2\kappa-k\frac{\al_{i,x}}{\al_i}+k\frac{\al_{j,x}}{\al_j}+
 \sum\limits_{l:l\neq i,j}^N \al_l^2\eta(q_i,q_l,q_j)\,,\quad i\neq j
  }
 \end{array}
 \eq
 with
  \beq\label{e65}
  \begin{array}{c}
  \displaystyle{
 \eta(z_1,z_2,z_3)=\zeta(z_1-z_2)+\zeta(z_2-z_3)+\zeta(z_3-z_1)\stackrel{(\ref{a05})}{=}
  }
  \\ \ \\
    \displaystyle{
=
 E_1(z_1-z_2)+E_1(z_2-z_3)+E_1(z_3-z_1)\,.
  }
 \end{array}
 \eq
The Lax pair (\ref{e61})-(\ref{e62}) differs from the one derived in \cite{Krich22} by some simple redefinitions and diagonal gauge
transformation. In \cite{Krich22} the Zakharov-Shabat equations appeared as a compatibility condition
for a pair of linear problems arising from considering special solutions of KP equations. That is, (\ref{e09})
holds true by construction.
In our paper we do not discuss relation to KP hierarchy. For this reason let us make a comment on the
verification of the Zakharov-Shabat equation. In contrast to the Lax equation in the finite-dimensional case
(\ref{e21})-(\ref{e22}) this is a quite
tedious calculation. From the diagonal part or (\ref{e09}) one gets
  \beq\label{e66}
  \begin{array}{c}
  \displaystyle{
 {\dot p}_i+k\p_x m_i^0+
  }
  \\ \ \\
  \displaystyle{
 +Nc\al_i^2\sum\limits^N_{l:l\neq i}\al_l^2\wp'(q_{il})
 -2\al_i^2\sum\limits^N_{l:l\neq i}\al_l^2\wp(q_{il})
 \Big( -k\frac{\al_{i,x}}{\al_i}+k\frac{\al_{l,x}}{\al_l}+
 \sum\limits_{j:j\neq i,l}^N \al_j^2\eta(q_i,q_j,q_l) \Big)=0\,.
  }
 \end{array}
 \eq
In order to transform it to the form (\ref{e60}) one should use (\ref{a12}).

When verifying the off-diagonal part of the Zakharov-Shabat equation one
needs the following identity. For any $1 \leq i\neq j\leq N$
  \beq\label{e67}
  \begin{array}{c}
  \displaystyle{
 \sum\limits^N_{\substack{ s\neq l  \\ s\neq i,j;\, l\neq i,j }}
 a_la_s\Big( \eta(q_j,q_l,q_s)+\eta(q_i,q_l,q_s) \Big)\eta(q_i,q_l,q_j)=
 \sum\limits_{l\neq i,j}^N a_l\Big(\wp(q_{il})-\wp(q_{jl})\Big)\sum\limits_{s\neq i,j,l}^N a_s\,,
  }
 \end{array}
 \eq
 where $a_1,...,a_N$ are arbitrary complex variables. It can be proved by comparing the periodic properties
 and poles at both sides. For the calculations in the off-diagonal part of the Zakharov-Shabat equation
 we put $a_i=\al_i^2$. Due to (\ref{e601}) $\sum_{i=1}^N\al_i^2=Nc$, and we have
  \beq\label{e68}
  \begin{array}{c}
  \displaystyle{
 \sum\limits^N_{\substack{ s\neq l  \\ s\neq i,j;\, l\neq i,j }}
 \al_l^2\al_s^2\Big( \eta(q_j,q_l,q_s)+\eta(q_i,q_l,q_s) \Big)\eta(q_i,q_l,q_j)=
   }
  \\
  \displaystyle{
 =\sum\limits_{l\neq i,j}^N \al_l^2\Big( Nc-\al_i^2-\al_j^2-\al_l^2 \Big)\Big(\wp(q_{il})-\wp(q_{jl})\Big)\,,
 \quad\forall i\neq j\,.
  }
 \end{array}
 \eq
 The rest of calculations are performed straightforwardly with the help of (\ref{a04}), (\ref{a061}) and (\ref{a08})-(\ref{a11}).

\paragraph{Classical $r$-matrix.} Main statement is as follows.

\begin{theor}
The $U$-matrix (\ref{e61})
satisfies the Maillet $r$-matrix structure (\ref{e11}) with the $r$-matrix for the finite-dimensional Calogero-Moser model (\ref{e26}), where positions of particles $q_i$ are replaced with the fields $q_i(x)$:
\beq\label{e69}
  \begin{array}{c}
  \displaystyle{
 {\bf r}^{\hbox{\tiny{2dCM}}}_{12}(z,w|x) =
 (E_1(z-w)+E_1(w))\,\sum\limits_{i=1}^NE_{ii}\otimes E_{ii}+
  }
  \\
   \displaystyle{
+
 \sum\limits^N_{i\neq j}\phi(z-w,q_i(x)-q_j(x))\,E_{ij}\otimes E_{ji}
 -\sum\limits^N_{i\neq j}\phi(-w,q_i(x)-q_j(x))\,E_{ii}\otimes E_{ji}\,.
 }
 \end{array}
 \eq
\end{theor}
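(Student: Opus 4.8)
The plan is to verify (\ref{e11}) by directly computing the left-hand side $\{U^{\hbox{\tiny{2dCM}}}_1(z,x),U^{\hbox{\tiny{2dCM}}}_2(w,y)\}$ from the canonical brackets (\ref{e08}) and matching it term by term against the right-hand side built from (\ref{e69}). Since $\{q,q\}=\{p,p\}=0$, the only nonzero contributions arise from pairing a momentum $p_i$, which sits solely on the diagonal of $U^{\hbox{\tiny{2dCM}}}$, against the $q$-dependent entries of the other tensor factor. The crucial observation is that, through $\al_i^2=kq_{i,x}+c$ and the diagonal term $-k\al_{i,x}/\al_i=-k^2q_{i,xx}/(2\al_i^2)$, the matrix $U^{\hbox{\tiny{2dCM}}}$ depends not only on the fields $q_i(x)$ but also on $q_{i,x}$ and $q_{i,xx}$. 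Applying the chain rule to (\ref{e08}) therefore produces three families of contributions: the ultralocal ones $\propto\delta(x-y)$ (from $p$ hitting the $q_{ij}$ inside $\phi$), and non-ultralocal ones $\propto\p_y\delta(x-y)$ and $\propto\p_y^2\delta(x-y)$ (from $p$ hitting $\al^2$ and $-k\al_{,x}/\al$), together with their $x$-counterparts coming from the opposite ordering. First I would collect these with their matrix-unit coefficients $E_{ij}\otimes E_{kl}$ and elliptic-function prefactors.

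The key technical step is the distributional reduction. Using $\p_y\delta(x-y)=-\delta'(x-y)$, $\p_x\delta(x-y)=\delta'(x-y)$ together with the identities $G(y)\delta'(x-y)=G(x)\delta'(x-y)+G'(x)\delta(x-y)$ and the analogous one for $\delta''$, I would bring every term into the canonical form $P(x)\delta(x-y)+Q(x)\delta'(x-y)+R(x)\delta''(x-y)$. The first requirement, namely $R(x)\equiv 0$, holds for a structural reason: the only source of $\delta''$ is the $q_{xx}$-dependence of $-k\al_{i,x}/\al_i$, which occupies the diagonal in both tensor slots, so its leading $\delta''$-coefficients $\mp k^2/(2\al_i^2)$ on $E_{ii}\otimes E_{ii}$ cancel between the two orderings; the subleading pieces produced by reducing the $y$-dependent $\delta''$-distribution feed consistently into $Q(x)$ and $P(x)$. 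This cancellation is precisely the statement that the bracket is of Maillet type.

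It then remains to identify the surviving coefficients with the two pieces of (\ref{e11}). The coefficient $Q(x)$ is controlled entirely by the brackets $\{p_i(x),\al_j^2(y)\}=k\de_{ij}\p_y\delta(x-y)$, and one checks that it reproduces $-k\big({\bf r}^{\hbox{\tiny{2dCM}}}_{12}(z,w|x)+{\bf r}^{\hbox{\tiny{2dCM}}}_{21}(w,z|x)\big)$: the diagonal parts combine using oddness of $E_1$, and the off-diagonal $E_{ii}\otimes E_{ji}$ contributions match after the symmetry relation $\phi(w,q_{ji})=-\phi(-w,q_{ij})$, while the $E_{ij}\otimes E_{ji}$ terms of the two $r$-matrices cancel and correctly do not appear in $Q(x)$. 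For the coefficient $P(x)$, which must equal $-k\p_x{\bf r}^{\hbox{\tiny{2dCM}}}_{12}(z,w|x)+[U^{\hbox{\tiny{2dCM}}}_1(z,x),{\bf r}^{\hbox{\tiny{2dCM}}}_{12}(z,w|x)]-[U^{\hbox{\tiny{2dCM}}}_2(w,y),{\bf r}^{\hbox{\tiny{2dCM}}}_{21}(w,z|x)]$, I would exploit the structural parallel with the finite-dimensional Lax matrix (\ref{e21}): $U^{\hbox{\tiny{2dCM}}}$ is obtained from $L^{\hbox{\tiny{CM}}}$ by promoting the coupling $c$ to the field-dependent $\al_j^2$ and adding $-k\al_{i,x}/\al_i$. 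The part of $P(x)$ with the derivative data frozen at coincident points then reproduces the finite-dimensional computation (\ref{e27})--(\ref{e28}) with $c\to\al^2$, yielding the commutator structure of (\ref{e04}); the residual pieces carry an explicit $q_{ij,x}=(\al_i^2-\al_j^2)/k$ through $\p_q\phi=f$ and must reassemble into $-k\p_x{\bf r}^{\hbox{\tiny{2dCM}}}_{12}$, since $-k\p_x\phi(z-w,q_{ij}(x))=-f(z-w,q_{ij})(\al_i^2-\al_j^2)$ and likewise for the $\phi(-w,q_{ij})$ entries.

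The main obstacle is twofold. First, the careful distributional bookkeeping: keeping track of which $\p_y^n\delta$ and $\p_x^n\delta$ arise from each ordering, verifying the exact cancellation of the pure $\delta''$-term, and correctly collecting the lower-order pieces generated by the reduction. Second, and more laborious, is the elliptic-function algebra in the $\delta$-coefficient: one must show that the frozen-derivative part gives the commutators of (\ref{e04}) and that the leftover derivative contributions combine precisely into $-k\p_x{\bf r}^{\hbox{\tiny{2dCM}}}_{12}$ with no spurious remainder, which requires the addition and degeneration relations of Appendix A already used in the finite-dimensional proof, namely (\ref{a07})--(\ref{a11}). Once these two points are settled, the identification of $P(x),Q(x),R(x)$ with the three pieces of (\ref{e11}) completes the proof.
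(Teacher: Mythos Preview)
Your plan is correct and follows the same direct-computation strategy as the paper. The one simplification you miss is the observation, which the paper establishes as a preliminary step, that the shift $p_i(x)\mapsto p_i(x)-k\al_{i,x}/\al_i$ is a canonical transformation (verified precisely via the identity $(f(x)-f(y))\delta''(x-y)=-(f'(x)+f'(y))\delta'(x-y)$ you quote). Hence the diagonal entry $p_i-k\al_{i,x}/\al_i$ may be treated as a new momentum $\tilde p_i$ with $\{\tilde p_i(x),\tilde p_j(y)\}=0$ and $\{\tilde p_i(x),q_j(y)\}=\delta_{ij}\delta(x-y)$. With this, no $\delta''$ ever enters the computation of $\{U_1,U_2\}$, and the ``subleading pieces'' you propose to track through the $\delta''$-reduction are guaranteed a priori to cancel among themselves. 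Your brute-force reduction would reach the same endpoint, but the canonical-transformation shortcut collapses the entire diagonal--diagonal bracket to the single term $-k(E_1(z)+E_1(w))\delta'(x-y)$ and makes your assertion that ``$Q(x)$ is controlled entirely by $\{p_i,\alpha_j^2\}$'' literally true rather than true only after further cancellations.

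One small correction on the $P(x)$ matching: because the coupling in $U^{\hbox{\tiny{2dCM}}}$ is column-dependent ($\alpha_j^2$ rather than a constant $c$), the commutators $[U_1,{\bf r}_{12}]-[U_2,{\bf r}_{21}]$ produce an $E_{ij}\otimes E_{ji}$ term $f(z-w,q_{ij})(\alpha_i^2-\alpha_j^2)$ which has no analogue in the mechanical calculation (\ref{e28}). This extra piece is cancelled exactly by the $E_{ij}\otimes E_{ji}$ part of $-k\p_x{\bf r}_{12}$, consistent with the fact that the left-hand side carries no $E_{ij}\otimes E_{ji}$ contribution at all. So the split you anticipate between ``frozen'' commutator terms and ``residual'' derivative terms is not quite aligned with how the pieces actually regroup, though the final balance is of course the same.
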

\begin{proof}
  In order to prove this statement we use the standard relations for the delta-function and
 its derivatives:
  \beq\label{e70}
  \begin{array}{c}
  \displaystyle{
 f(y)\delta'(x-y)=f(x)\delta'(x-y)+f'(x)\delta(x-y)\,,
  }
 \end{array}
 \eq
  \beq\label{e71}
  \begin{array}{c}
  \displaystyle{
 f(y)\delta''(x-y)=f(x)\delta''(x-y)+2f'(x)\delta'(x-y)+f''(x)\delta(x-y)
  }
 \end{array}
 \eq
or
  \beq\label{e72}
  \begin{array}{c}
  \displaystyle{
 (f(x)-f(y))\delta''(x-y)=-(f'(x)+f'(y))\delta'(x-y)\,.
  }
 \end{array}
 \eq
 Using (\ref{e72}) one can verify that the following map is canonical:
  \beq\label{e73}
  \begin{array}{c}
  \displaystyle{
 p_i(x)\rightarrow p_i(x)+b\frac{k^2q_{i,xx}}{kq_{i,x}+c}=
  p_i(x)+\frac{bk}{2}\frac{\al_{i,x}}{\al}\,,
  }
  \\ \ \\
    \displaystyle{
 q_i(x)\rightarrow q_i(x)\,,
  }
 \end{array}
 \eq
 where $b$ and $c$ are arbitrary constants. This is needed for the computation of the
 left-hand side of (\ref{e11}). Namely, we have:
  \beq\label{e74}
  \begin{array}{c}
  \displaystyle{
\{U_1(z,x),U_2(w,y)\}=-k\sum\limits_{i=1}^N E_{ii}\otimes E_{ii}\Big( E_1(z)+E_1(w) \Big)\delta'(x-y)+
  }
  \\
  \displaystyle{
+\sum\limits_{i\neq j}^N E_{ii}\otimes E_{ij}\, f(w,q_{ij}(x))\al_j^2(x)\delta(x-y)
-\sum\limits_{i\neq j}^N E_{ij}\otimes E_{ii}\, f(z,q_{ij}(x))\al_j^2(x)\delta(x-y)+
 }
   \\
  \displaystyle{
+\sum\limits_{i\neq j}^N E_{ij}\otimes E_{jj}
\Big(f(z,q_{ij}(x))\al_j^2(x)\delta(x-y)-\phi(z,q_{ij}(x))k\delta'(x-y)\Big)-
 }
    \\
  \displaystyle{
-\sum\limits_{i\neq j}^N E_{ii}\otimes E_{ji}
\Big(f(w,q_{ji}(x))\al_i^2(x)\delta(x-y)+\phi(w,q_{ji}(y))k\delta'(x-y)\Big)\,.
 }
 \end{array}
 \eq
 The expression in the last line can be transformed using (\ref{e70}):
  \beq\label{e75}
  \begin{array}{c}
  \displaystyle{
 f(w,q_{ji}(x))\al_i^2(x)\delta(x-y)+\phi(w,q_{ji}(y))k\delta'(x-y)=
  }
    \\ \ \\
  \displaystyle{
 =f(w,q_{ji}(x))\al_i^2(x)\delta(x-y)+\phi(w,q_{ji}(x))k\delta'(x-y)
 +k(q_{j,x}-q_{i,x})f(w,q_{ji}(x))\delta(x-y)=
  }
   \\ \ \\
     \displaystyle{
 = f(w,q_{ji}(x))\al_j^2(x)\delta(x-y)+\phi(w,q_{ji}(x))k\delta'(x-y)\,.
  }
 \end{array}
 \eq
Next, consider the right-hand side of (\ref{e11}). The commutator terms are evaluated
using (\ref{a07})-(\ref{a08}):
  \beq\label{e76}
  \begin{array}{c}
  \displaystyle{
[U_1(z,x),{\bf r}^{\hbox{\tiny{2dCM}}}_{12}(z,w|x)]-[U_2(w,y),{\bf r}^{\hbox{\tiny{2dCM}}}_{21}(w,z|x)]=
  }
  \\ \ \\
  \displaystyle{
+\sum\limits_{i\neq j}^N E_{ii}\otimes E_{ij}\, f(w,q_{ij}(x))\al_j^2(x)
-\sum\limits_{i\neq j}^N E_{ij}\otimes E_{ii}\, f(z,q_{ij}(x))\al_j^2(x)+
 }
   \\
  \displaystyle{
+\sum\limits_{i\neq j}^N E_{ij}\otimes E_{jj}\, f(z,q_{ij}(x))\al_j^2(x)
-\sum\limits_{i\neq j}^N E_{ii}\otimes E_{ji}\, f(w,q_{ji}(x))\al_i^2(x)+
 }
    \\
  \displaystyle{
+\sum\limits_{i\neq j}^N E_{ij}\otimes E_{ji}\, f(z-w,q_{ij}(x))(\al_i^2(x)-\al_j^2(x))\,.
 }
 \end{array}
 \eq
The derivative term gives
  \beq\label{e77}
  \begin{array}{c}
  \displaystyle{
k\p_x{\bf r}^{\hbox{\tiny{2dCM}}}_{12}(z,w|x)=
\sum\limits_{i\neq j}^N E_{ii}\otimes E_{ji}\,
f(w,q_{ji}(x))k(q_{j,x}-q_{i,x})+
  }
  \\
  \displaystyle{
+\sum\limits_{i\neq j}^N E_{ij}\otimes E_{ji}\, f(z-w,q_{ij}(x))k(q_{i,x}-q_{j,x})\,.
 }
 \end{array}
 \eq
 Finally, the non-ultralocal term yields
  \beq\label{e78}
  \begin{array}{c}
  \displaystyle{
 \Big({\bf r}^{\hbox{\tiny{2dCM}}}_{12}(z,w|x)+{\bf r}^{\hbox{\tiny{2dCM}}}_{21}(w,z|x)\Big)k\delta'(x-y)=
 \sum\limits_{i=1}^N E_{ii}\otimes E_{ii}\Big( E_1(z)+E_1(w) \Big)k\delta'(x-y)+
 }
 \\
   \displaystyle{
+\sum\limits_{i\neq j}^N E_{ij}\otimes E_{jj}\, \phi(z,q_{ij}(x))k\delta'(x-y)
+\sum\limits_{i\neq j}^N E_{ii}\otimes E_{ji}\, \phi(w,q_{ji}(x))k\delta'(x-y)\,.
  }
 \end{array}
 \eq
 By summing up all the inputs the statement is proved.
 \end{proof}

 \paragraph{Modified classical $r$-matrix.} In the end of the paragraph consider also the $U$-matrix of the field Calogero-Moser model
 in the center of mass frame from the very beginning:
  \beq\label{e781}
  \begin{array}{c}
  \displaystyle{
 {\bar U}^{\hbox{\tiny{2dCM}}}_{ij}(z)=\delta_{ij}
 \Big( p_i+{\bar \al}_i^2E_1(z)-k\frac{{\bar \al}_{i,x}}{{\bar \al}_i} \Big)
 +(1-\delta_{ij})\phi(z,q_i-q_j){\bar \al}_j^2\,,
  }
 \end{array}
 \eq
 where
  \beq\label{e782}
  \begin{array}{c}
  \displaystyle{
{\bar \al}_j^2=k{\bar q}_{j,x}+c
  }
 \end{array}
 \eq
 and
  \beq\label{e783}
  \begin{array}{c}
  \displaystyle{
{\bar q}_{j}=q_j-\frac{1}{N}\sum\limits_{k=1}^N q_k\,.
  }
 \end{array}
 \eq
 Straightforward calculation shows that in this case the classical $r$-matrix entering the Maillet bracket
 is of the form:
  \beq\label{e784}
  \begin{array}{c}
  \displaystyle{
{\bar {\bf r}}^{\hbox{\tiny{2dCM}}}_{12}(z,w|x)=
{\bf r}^{\hbox{\tiny{2dCM}}}_{12}(z,w|x)-{\bar l}_2(w,x)\,,\qquad {\bar l}_2(w,x)=1_N\otimes {\bar l}(w,x)\,,
  }
 \end{array}
 \eq
 and
  \beq\label{e785}
  \begin{array}{c}
  \displaystyle{
{\bar l}_{ij}(w,x)=\frac{1}{N}\Big(\delta_{ij}E_1(w)+(1-\delta_{ij})\phi(w,q_i(x)-q_j(x))\Big)\,.
  }
 \end{array}
 \eq

\subsection{1+1 spin and multispin Calogero-Moser model}
%
The description of 1+1 version of the spin Calogero-Moser model is very
similar to the finite-dimensional case. Introduce a set of fields on a circle $\mS_{ij}(x)$, $i,j=1...N$ --
field analogues of the spin variables $S_{ij}$. The matrix $\mS(x)$ can be considered as an element
of the loop coalgebra $L^*({\rm gl}_N)$, which is dual to the loop algebra $L({\rm gl}_N)$ of ${\mathcal C}^\infty$-maps
${\mathbb S}^1\rightarrow {\rm gl}_N$. The space $L^*({\rm gl}_N)$ is endowed with the linear Poisson
structure:
  \beq\label{e79}
  \begin{array}{c}
  \displaystyle{
   \{\mS_{ij}(x),\mS_{kl}(y)\}=\Big(-\mS_{il}(x)\delta_{kj}+\mS_{kj}(x)\delta_{il}\Big)\delta(x-y)\,.
 }
 \end{array}
 \eq
A coadjoint orbit $\mO$ appears by fixing the Casimir functions of (\ref{e79}).
Together with the fields $p_i(x)$, $q_j(x)$ with the canonical Poisson bracket (\ref{e08})
this provides a description of the unreduced phase space.
General construction was described in \cite{LOZ}. In the elliptic case it leads to
the following $U$-matrix:
  \beq\label{e80}
  \begin{array}{c}
  \displaystyle{
U^{\hbox{\tiny{2dSpin}}}_{ij}(z,x)=
\delta_{ij}(p_i+\mS_{ii}(x)E_1(z))+(1-\delta_{ij})\mS_{ij}(x)\phi(z,q_i(x)-q_j(x))\,,
 }
 \end{array}
 \eq
which is a straightforward field analogue of the finite-dimensional Lax matrix (\ref{e29}).
At the same time, in contrast to (\ref{e36}), the coadjoint action of the loop group provides the
constraint
  \beq\label{e801}
  \begin{array}{c}
  \displaystyle{
\mS_{ii}(x)=k\p_xq_{i}(x)+c\,,\quad i=1,...,N
 }
 \end{array}
 \eq
 or
  \beq\label{e8011}
  \begin{array}{c}
  \displaystyle{
\mS_{ii}(x)=\al_i^2(x)\,,\quad i=1,...,N\,.
 }
 \end{array}
 \eq
The Hamiltonian formulation and derivation of equations of motion is a separate non-trivial task, and we
do not go into it here.

Consider the $r$-matrix (\ref{e40})
\beq\label{e802}
  \begin{array}{c}
  \displaystyle{
 {\bf r}^{\hbox{\tiny{2dSpin}}}_{12}(z,w|x)=
 E_1(z-w)\,\sum\limits_{i=1}^NE_{ii}\otimes E_{ii}+
 \sum\limits^N_{i\neq j}\phi(z-w,q_{i}(x)-q_j(x))\,E_{ij}\otimes E_{ji}\,.
 }
 \end{array}
 \eq
 %
\begin{theor}
Then the $U$-matrix (\ref{e80}) and $r$-matrix (\ref{e802})
 satisfy the Maillet ultralocal $r$-matrix structure with the unwanted term
  \beq\label{e803}
  \begin{array}{c}
  \displaystyle{
\{U_1(z,x),U_2(w,y)\}=
  }
  \\ \ \\
  \displaystyle{
\Big(-k\p_x {\bf r}_{12}(z,w|x)
+ [U_1(z,x),{\bf r}_{12}(z,w|x)]-[U_2(w,y),{\bf r}_{21}(w,z|x)]\Big)\delta(x-y)-
  }
 \\ \ \\
  \displaystyle{
  -\sum\limits_{k\neq l}^N E_{kl}\otimes E_{lk} \Big(\mu_k(x)-\mu_l(x)\Big)f(z-w,q_k(x)-q_l(x))\delta(x-y)\,,
 }
 \end{array}
 \eq
where
\beq\label{e804}
  \begin{array}{c}
  \displaystyle{
\mu_i(x)=\mS_{ii}(x)-k\p_xq_{i}(x)\,,\quad i=1,...,N\,.
 }
 \end{array}
 \eq
\end{theor}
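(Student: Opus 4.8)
The plan is to exploit the fact that the $U$-matrix (\ref{e80}) has precisely the functional form of the finite-dimensional spin Lax matrix (\ref{e29}) under $q_i\mapsto q_i(x)$, $S_{ij}\mapsto\mS_{ij}(x)$, $p_i\mapsto p_i(x)$, and that it depends on these fields but on \emph{none} of their $x$-derivatives. Evaluating the left-hand side $\{U_1(z,x),U_2(w,y)\}$ with the canonical brackets (\ref{e08}) and the loop Poisson-Lie brackets (\ref{e79}), every elementary bracket carries a single factor $\delta(x-y)$; a derivative $\p_y\delta(x-y)$ can never appear, since that would require a dependence of $U$ on $q_{i,y}$ or $\mS_{ij,y}$, which is absent. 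Consequently the structure is ultralocal — there is no $\delta'(x-y)$ term — and the bracket factorizes as
\beq
\{U_1(z,x),U_2(w,y)\}=\Big(\{L_1^{\hbox{\tiny{spin}}}(z),L_2^{\hbox{\tiny{spin}}}(w)\}\Big|_{q\to q(x),\,S\to\mS(x)}\Big)\,\delta(x-y)\,,
\eq
the coefficient being exactly the finite-dimensional bracket of Section \ref{sec2}.

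First I would substitute the finite-dimensional exchange relation (\ref{e39}) into this coefficient. Since the $r$-matrix (\ref{e802}) is the field version of the skew-symmetric (\ref{e40}), this yields
\beq
\begin{array}{c}
\{U_1(z,x),U_2(w,y)\}=\Big([U_1(z,x)+U_2(w,x),{\bf r}^{\hbox{\tiny{2dSpin}}}_{12}(z-w|x)]-
\\
-\sum\limits_{k\neq l}^N E_{kl}\ox E_{lk}\big(\mS_{kk}-\mS_{ll}\big)f(z-w,q_{kl})\Big)\delta(x-y)\,,
\end{array}
\eq
where I have used $U_2(w,y)\delta(x-y)=U_2(w,x)\delta(x-y)$ (no derivative of $\delta$ is present). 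Invoking the skew-symmetry (\ref{e41}), ${\bf r}_{12}(z-w)=-{\bf r}_{21}(w-z)$, I would split the single commutator as $[U_1,{\bf r}_{12}]-[U_2,{\bf r}_{21}]$, reproducing the commutator part of the Maillet bracket (\ref{e803}).

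It then remains to reconcile the $\mS_{kk}-\mS_{ll}$ coefficient with the Maillet form, whose unwanted term carries the shifted coefficient $\mu_k-\mu_l$ of (\ref{e804}) and which in addition contains the derivative term $-k\p_x{\bf r}_{12}$. Differentiating (\ref{e802}) as in (\ref{e77}) and using that the $E_1(z-w)$ part is $q$-independent gives
\beq
-k\p_x{\bf r}^{\hbox{\tiny{2dSpin}}}_{12}(z,w|x)=-k\sum\limits_{i\neq j}^N E_{ij}\ox E_{ji}\,f(z-w,q_{ij})(q_{i,x}-q_{j,x})\,.
\eq
On the other hand, inserting $\mu_k-\mu_l=(\mS_{kk}-\mS_{ll})-k(q_{k,x}-q_{l,x})$ splits the unwanted term of (\ref{e803}) into the genuine $-(\mS_{kk}-\mS_{ll})f$ piece plus a spatial-derivative piece $+k(q_{k,x}-q_{l,x})f$; the latter cancels the $-k\p_x{\bf r}_{12}$ term identically (after relabelling $i,j\to k,l$), leaving exactly $-\sum_{k\neq l}E_{kl}\ox E_{lk}(\mS_{kk}-\mS_{ll})f\,\delta(x-y)$. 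This matches the factorized bracket and completes the verification.

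I expect no genuine obstacle: the spin case is in fact simpler than the spinless Theorem, because $U^{\hbox{\tiny{2dSpin}}}$ contains neither $q_{i,x}$ nor $q_{i,xx}$ and therefore generates no $\delta'$ or $\delta''$ contributions, so the canonical change of variables (\ref{e73}) used there is here unnecessary. The only point requiring care is conceptual rather than computational: the term $-k\p_x{\bf r}_{12}$ is not produced by the Poisson brackets but is an artefact of casting the answer into Maillet form, and it is precisely absorbed by the spatial-derivative part of $\mu_k$ — which is exactly why the correct Maillet coefficient is $\mu_k=\mS_{kk}-k\p_xq_k$ rather than $\mS_{kk}$. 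As a consistency check, on the constraint surface (\ref{e8011}), where $\mS_{kk}=\al_k^2=kq_{k,x}+c$, one has $\mu_k\equiv c$ and the unwanted term drops out, leaving a genuine ultralocal Maillet structure.
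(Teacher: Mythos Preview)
Your proposal is correct and follows essentially the same approach as the paper: you reduce the field bracket to the finite-dimensional relation (\ref{e39}) times $\delta(x-y)$ via ultralocality of $U^{\hbox{\tiny{2dSpin}}}$, and then observe that the extra $-k\p_x{\bf r}_{12}$ term in the Maillet form is precisely what shifts the coefficient $\mS_{kk}-\mS_{ll}$ of the unwanted term to $\mu_k-\mu_l$. The paper's proof is terser but identical in structure, recording the same derivative computation (\ref{e805}) and the same absorption into $\mu_i$.
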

\begin{proof}This is the field analogue of the finite-dimensional relation (\ref{e39}). The non-ultralocal term is absent here
since the $r$-matrix (\ref{e802}) is skew-symmetric, see (\ref{e41}). The proof of the relation (\ref{e803})
is simple. Indeed, similarly to the finite-dimensional case
$U$-matrix (\ref{e80}) and $r$-matrix (\ref{e802}) satisfy a direct field generalization of (\ref{e39}) with the factor $\delta(x-y)$ for all the terms. But (\ref{e39}) does not contain the term $\p_x {\bf r}_{12}(z,w|x)$. Its input
in the field case
\beq\label{e805}
  \begin{array}{c}
  \displaystyle{
 k\p_x {\bf r}^{\hbox{\tiny{2dSpin}}}_{12}(z,w)=
 \sum\limits^N_{i\neq j}f(z-w,q_{i}-q_j)(kq_{i,x}-kq_{j,x})\,E_{ij}\otimes E_{ji}
 }
 \end{array}
 \eq
is added to the unwanted term providing $\mu_i$ instead of $\mS_{ii}$ in (\ref{e39}).
\end{proof}

The unwanted term in (\ref{e803}) vanishes when
\beq\label{e806}
  \begin{array}{c}
  \displaystyle{
 \mu_i=c\,,\quad i=1,...,N\,,
 }
 \end{array}
 \eq
that is when the conditions (\ref{e801}) hold true. On shell constraints (\ref{e806})
the relation (\ref{e803}) turns into the Maillet bracket. Similarly to the finite-dimensional case
one should perform Poisson
reduction with respect to the coadjoint action of the Cartan subgroup in $L({\rm GL}_N)$ by
supplying (\ref{e806}) with some gauge fixing conditions.
In this way it is possible to obtain the reduced $r$-matrix, and it can be non-ultralocal. Indeed,
the reduction (\ref{e43}) makes the $r$-matrix non skew-symmetric, and this leads to the non-ultralocal terms.

\paragraph{Special case: reduction to the spinless model.}
Consider the case when $\mS$ is a rank one matrix, that is
\beq\label{e807}
  \begin{array}{c}
  \displaystyle{
 \mS_{ij}(x)=\xi_i(x)\eta_j(x)\,.
 }
 \end{array}
 \eq
Similarly to (\ref{e44})-(\ref{e45}) the bracket (\ref{e79})
is parameterized by the canonical Poisson bracket
\beq\label{e808}
  \begin{array}{c}
  \displaystyle{
 \{\xi_i(x),\eta_j(y)\}=\delta_{ij}\delta(x-y)\,.
 }
 \end{array}
 \eq
Next, we perform the same calculation as it was described in (\ref{e46})-(\ref{e48}). Namely, perform the gauge
transformation
\beq\label{e809}
  \begin{array}{c}
  \displaystyle{
 U^{\hbox{\tiny{2dSpin}}}(z,x)\rightarrow
 g^{-1} U^{\hbox{\tiny{2dSpin}}}(z,x) g-kg^{-1}\p_x g\,,\quad g={\rm diag}(\xi_1(x),...,\xi_N(x))
 }
 \end{array}
 \eq
The resultant $U$-matrix has the following entries
\beq\label{e810}
  \begin{array}{c}
  \displaystyle{
 \delta_{ij}
 \Big( p_i+\mS_{ii}E_1(z)-k\frac{\xi_{i,x}}{\xi_i} \Big)
 +(1-\delta_{ij})\phi(z,q_i-q_j)\mS_{jj}
 }
 \end{array}
 \eq
By imposing constraints (\ref{e8011}) and choosing gauge fixing condition as $\xi_i=\al_i$
one gets exactly the $U$-matrix for the spinless case (\ref{e61}). It is described
by the $r$-matrix (\ref{e69}), which provides the non-ultralocal terms.

\paragraph{1+1 multispin models.}
Following \cite{LOZ}  define the $U$-matrix for the multi-pole (or multi-spin) model:
  \beq\label{e81}
  \begin{array}{c}
  \displaystyle{
U^{\hbox{\tiny{mult-spin}}}_{ij}(z,x)=
\delta_{ij}(p_i+\sum\limits_{a=1}^M\mS^a_{ii}(x)E_1(z-z_a))+
(1-\delta_{ij})\sum\limits_{a=1}^M\mS^a_{ij}(x)\phi(z-z_a,q_i(x)-q_j(x))\,.
 }
 \end{array}
 \eq
Its form is exactly the same as in the finite-dimensional case (\ref{e49}).
Description of this model is similar to the spin case but the phase space is larger.
It contains $M$ components with the Poisson structure
  \beq\label{e82}
  \begin{array}{c}
  \displaystyle{
   \{\mS^a_{ij}(x),\mS^b_{kl}(y)\}=\delta^{ab}
   \Big(-\mS^a_{il}(x)\delta_{kj}+\mS^a_{kj}(x)\delta_{il}\Big)\delta(x-y)\,,\quad
   a,b=1,...,M\,.
 }
 \end{array}
 \eq
The Cartan subgroup acts on all the components simultaneously providing the constraint
  \beq\label{e83}
  \begin{array}{c}
  \displaystyle{
\sum\limits_{a=1}^M\mS^a_{ii}(x)=\al_i^2(x)\,,\quad i=1,...,N\,,
 }
 \end{array}
 \eq
 which is a field analogue of the constraint (\ref{e53}).

The $r$-matrix structure on the unreduced phase space is the same as in the spin case.
More precisely, it is given by (\ref{e803}) but with
\beq\label{e84}
  \begin{array}{c}
  \displaystyle{
\mu_i(x)=\sum\limits_{a=1}^M\mS^a_{ii}(x)-k\p_xq_{i}(x)\,,\quad i=1,...,N\,.
 }
 \end{array}
 \eq
 %

\section{Continuous version of the classical IRF-Vertex relation}\label{sec4}
\setcounter{equation}{0}

Up till now we considered the models related to certain dynamical $r$-matrices.
There is also a class of 1+1 models related to non-dynamical elliptic Belavin-Drinfeld $r$-matrix \cite{BD}.
The latter is defined as
 \beq\label{e85}
 \begin{array}{c}
  \displaystyle{
 r^{\hbox{\tiny{BD}}}_{12}(z,w)=E_1(z-w)\frac{1_N\otimes 1_N}{N}+\sum\limits_{\substack{a\in\,
 \mZ_{ N}\times\mZ_{ N} \\ a\neq(0,0)}} \frac{T_a\otimes T_{-a}}{N}
 \exp (2\pi\imath\,\frac{a_2(z-w)}{N})\,\phi(z-w,\frac{a_1+a_2\tau}{N})\,,
 }
 \end{array}
 \eq
being written in a special matrix basis (in $\Mat$)
 \beq\label{e86}
 \begin{array}{c}
  \displaystyle{
T_a=T_{a_1 a_2}=\exp\left(\frac{\pi\imath}{{ N}}\,a_1
 a_2\right)Q_1^{a_1}Q_2^{a_2}\in\Mat\,,\quad a=(a_1,a_2)\in\mZ_{ N}\times\mZ_{ N}\,,
 }
 \end{array}
 \eq
 defined in terms of the pair of matrices
 \beq\label{e87}
 \begin{array}{c}
  \displaystyle{
(Q_1)_{kl}=\delta_{kl}\exp(\frac{2\pi\imath}{{ N}}k)\,,\qquad
(Q_2)_{kl}=\delta_{k-l+1=0\,{\hbox{\tiny{mod}}}\,
 { N}}\,.
 }
 \end{array}
 \eq
The $r$-matrix (\ref{e85}) is non-dynamical, that is it is independent of dynamical variables.
In particular, it means that this $r$-matrix is exactly the same in the field case:
 \beq\label{e851}
 \begin{array}{c}
  \displaystyle{
 {\bf r}^{\hbox{\tiny{BD}}}_{12}(z,w|x)=r^{\hbox{\tiny{BD}}}_{12}(z,w)
 }
 \end{array}
 \eq
 since
 \beq\label{e852}
 \begin{array}{c}
  \displaystyle{
 \p_x {\bf r}^{\hbox{\tiny{BD}}}_{12}(z,w|x)=0\,.
 }
 \end{array}
 \eq
 Moreover, $r$-matrix (\ref{e85}) is skew-symmetric\footnote{A review of its properties can be found for example
in \cite{ZZ}.}: $r^{\hbox{\tiny{BD}}}_{12}(z,w)=-r^{\hbox{\tiny{BD}}}_{21}(w,z)$, so that only the commutator terms survive in the Maillet bracket (\ref{e11}).
The corresponding $U$-matrix is the one of the Sklyanin type, i.e.
 \beq\label{e88}
 \begin{array}{c}
  \displaystyle{
U^{\hbox{\tiny{Skl}}}(z,x)=\sum\limits_{\substack{a\in\,
 \mZ_{ N}\times\mZ_{ N} \\ a\neq(0,0)}} T_a\mS_a(x) \exp (2\pi\imath\,\frac{a_2z}{N})\,\phi(z,\frac{a_1+a_2\tau}{N})\,,
 }
 \end{array}
 \eq
where $\mS_a(x)$ are dynamical field variables. From the above arguments it follows that
  \beq\label{e8033}
  \begin{array}{c}
  \displaystyle{
\{U^{\hbox{\tiny{Skl}}}_1(z,x),U^{\hbox{\tiny{Skl}}}_2(w,y)\}=
  }
  \\ \ \\
  \displaystyle{
=\Big(
 [U^{\hbox{\tiny{Skl}}}_1(z,x),{\bf r}^{\hbox{\tiny{BD}}}_{12}(z,w|x)]-[U^{\hbox{\tiny{Skl}}}_2(w,y),{\bf r}^{\hbox{\tiny{BD}}}_{21}(w,z|x)]\Big)\delta(x-y)=
  }
  \\ \ \\
  \displaystyle{
  =
 [U^{\hbox{\tiny{Skl}}}_1(z,x)+U^{\hbox{\tiny{Skl}}}_2(w,x),{\bf r}^{\hbox{\tiny{BD}}}_{12}(z,w|x)]\delta(x-y)\,.
  }
 \end{array}
 \eq
 This relation is equivalent to the Poisson brackets (\ref{e79}) for the components of matrix $\mS$.
In the case $N=2$ the $U$-matrix (\ref{e88}) is the one for the Landau-Lifshitz model \cite{Skl},
and three fields $\mS_a(x)$ are the components of the magnetization vector entering (\ref{e055}).
More general cases including the multi-pole generalizations and trigonometric and rational degenerations of (\ref{e88})
were studied in \cite{Z11,AtZ2}.

An existence of a certain gauge transformation relating 1+1 models of the Sklyanin-Landau-Lifshitz type on one hand
and the models of 1+1 Calogero-Moser field theory on the other hand was mentioned in
\cite{LOZ} and explicit change of variables was derived in \cite{AtZ1,AtZ3}.
At the same time, a similar relation between dynamical and non-dynamical $r$-matrices is widely known.
It is the IRF-Vertex correspondence \cite{Baxter2}. See a brief review in the Appendix B.
The relation between $r$-matrices is written in terms of
intertwining matrix
  \beq\label{e89}
  \begin{array}{l}
  \displaystyle{
 g(z,q)=\Xi(z,q)\left(d^{0}\right)^{-1}
 }
 \end{array}
 \eq
 with
 \beq\label{e90}
 \begin{array}{c}
  \displaystyle{
\Xi_{ij}(z,q)=
 \vth\left[  \begin{array}{c}
 \frac12-\frac{i}{N} \\ \frac N2
 \end{array} \right] \left(z-Nq_j+\sum\limits_{m=1}^N
 q_m\left.\right|N\tau\right)\,,
 }
 \end{array}
 \eq
and the diagonal matrix
 \beq\label{e91}
 \begin{array}{c}
  \displaystyle{
d^0_{ij}(z,q)=\delta_{ij}d^0_{j}=\delta_{ij}
 {\prod\limits_{k:k\neq j}^N\vth(q_j-q_k)}\,,
 }
 \end{array}
 \eq
 where $q_i$ are parameters in the IRF model. The same matrix is used to describe gauge equivalence
 between the finite-dimensional integrable systems, where the vertex type models are integrable tops,
 while the systems of Calogero-Ruijsenaars family play the role of IRF models, and $q_i$
are positions of particles.

Notice that the formula (\ref{e42}) describing the gauge transformation of $r$-matrix
is also valid in the field case.
Consider a gauge transformation with ''ultra-local brackets'', that is suppose that
the Poisson brackets for a matrix $G(z,x)\in\Mat$ of gauge transformation have the form
 \beq\label{e9101}
 \begin{array}{c}
  \displaystyle{
\{G_1(z,x),U_2(w,y)\}=\langle\{G_1,U_2\}\rangle(z,w,x)\delta(x-y)\,,
 }
 \\ \ \\
   \displaystyle{
\{G_1(z,x),G_2(w,y)\}=\langle\{G_1,G_2\}\rangle(z,w,x)\delta(x-y)\,,
 }
 \end{array}
 \eq
 where the expressions $\langle\{G_1,U_2\}\rangle(z,w,x)$ and $\langle\{G_1,G_2\}\rangle(z,w,x)$ are
 some $\Mat^{\otimes 2}$-valued functions of the field variables obtained by direct calculation
 from a given Poisson structure.
 \begin{lemma}
 Suppose $U\in\Mat$ satisfies the Maillet bracket (\ref{e11}) with some
 $r$-matrix ${\bf r}_{12}(z,w,x)$. Consider a gauge transformation
 \beq\label{e9102}
 \begin{array}{c}
  \displaystyle{
U(z,x)\rightarrow G(z,x)U(z,x)G^{-1}(z,x)+k\p_x G(z,x)G^{-1}(z,x)
 }
 \end{array}
 \eq
 with a matrix of gauge transformation satisfying the property of ultra-local brackets (\ref{e9101}).
 Then the gauge transformed $U$-matrix $U'(z,x)=G(z,x)U(z,x)G^{-1}(z,x)+k\p_x G(z,x)G^{-1}(z,x)$ again
 satisfies the Maillet bracket (\ref{e11}) with the gauge transformed $r$-matrix
\beq\label{e92}
  \begin{array}{c}
  \displaystyle{
 {\bf r}_{12}(z,w|x)\longrightarrow
 G_1(z,x)G_2(w,x)\Big(
 {\bf r}_{12}(z,w|x)-G_1^{-1}(z,x)\langle\{G_1,U_2\}\rangle(z,w,x)-
 }
 \end{array}
 \eq
 $$
 -\frac12\,\Big[G_1^{-1}(z,x)G_2^{-1}(w,x)\langle\{G_1,G_2\}\rangle(z,w,x),U_2(w,x)\Big]
 \Big)G^{-1}_1(z,x)G^{-1}_2(w,x)\,.
 $$
That is, the Maillet bracket (\ref{e11}) keeps its form under the gauge transformation of $U$-matrices
(\ref{e111}) and the transformation of $r$-matrix (\ref{e92}).
This is a direct generalization of the gauge transformation (\ref{e42})
 in the finite-dimensional mechanics.
\end{lemma}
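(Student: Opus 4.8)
The plan is to verify the claim by direct substitution of $U'(z,x)=G(z,x)U(z,x)G^{-1}(z,x)+k\partial_x G(z,x)G^{-1}(z,x)$ into the left-hand side of the Maillet bracket (\ref{e11}) and reorganizing the output into the same form with the dressed $r$-matrix (\ref{e92}). First I would split each factor as $U'_i=\mathcal{A}_i+\mathcal{B}_i$, with the conjugation part $\mathcal{A}_i=G_iU_iG_i^{-1}$ and the affine connection part $\mathcal{B}_i=k(\partial G_i)G_i^{-1}$ (the spatial derivative being taken at $x$ for $i=1$ and at $y$ for $i=2$). Expanding $\{U'_1(z,x),U'_2(w,y)\}$ by the Leibniz rule produces four groups $\{\mathcal{A}_1,\mathcal{A}_2\}$, $\{\mathcal{A}_1,\mathcal{B}_2\}$, $\{\mathcal{B}_1,\mathcal{A}_2\}$, $\{\mathcal{B}_1,\mathcal{B}_2\}$. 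In every group the only elementary brackets appearing are $\{U_1,U_2\}$, supplied by (\ref{e11}), and the $G$-brackets $\{G_1,U_2\}$, $\{U_1,G_2\}$, $\{G_1,G_2\}$, which by the ultra-locality assumption (\ref{e9101}) are proportional to $\delta(x-y)$ with the coefficients $\langle\{\cdot,\cdot\}\rangle$.

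Next I would treat the conjugation group $\{\mathcal{A}_1,\mathcal{A}_2\}$. Its computation is formally the field-theoretic counterpart of the finite-dimensional one leading to (\ref{e42}), with $g\mapsto G$, $L\mapsto U$, the opposite conjugation convention accounted for, and each bracket replaced by its $\langle\cdot\rangle$ coefficient times $\delta(x-y)$. Conjugating the commutator part of (\ref{e11}) produces the terms $[U'_1,G_1G_2{\bf r}_{12}G_1^{-1}G_2^{-1}]-[U'_2,\,\cdot\,]$ up to $G$-bracket corrections, while the $\{G_i,U_j\}$ and $\{G_1,G_2\}$ pieces assemble into exactly the $-G_1^{-1}\langle\{G_1,U_2\}\rangle$ and $-\tfrac12[G_1^{-1}G_2^{-1}\langle\{G_1,G_2\}\rangle,U_2]$ corrections of (\ref{e92}), dressed by $G_1G_2(\cdots)G_1^{-1}G_2^{-1}$. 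The conjugation of the non-ultralocal ($\delta'$) term of (\ref{e11}) seeds the transformed $\delta'$ term; since the second tensor slot is dressed by $G_2(w,y)$ while the first is dressed by $G_2(w,x)$-type objects, I would use (\ref{e70}) to move every $y$-argument onto $x$, which converts part of this $\delta'$ contribution into additional $\delta$-terms.

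The genuinely new feature is the connection groups. The brackets $\{\mathcal{B}_1,\mathcal{A}_2\}$, $\{\mathcal{A}_1,\mathcal{B}_2\}$, $\{\mathcal{B}_1,\mathcal{B}_2\}$ each carry a spatial derivative inherited from the $\partial G$ factors; when it acts on the $\delta(x-y)$ inside an ultra-local $G$-bracket it generates $\delta'(x-y)$, and in the doubly-differentiated group $\{\mathcal{B}_1,\mathcal{B}_2\}$ it generates $\delta''(x-y)$. The $\delta''$ contribution is controlled by (\ref{e72}): because its matrix coefficient is a difference of the same quantity evaluated at the two points, $\delta''$ collapses to $\delta'$. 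After systematically applying (\ref{e70})--(\ref{e72}) to push all $y$-dependence onto $x$, every higher distribution is reduced to $\delta$ and $\delta'$, with $x$-evaluated coefficients only.

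Finally I would collect terms and match against (\ref{e11}). The $\delta'$-coefficient must equal $-({\bf r}'_{12}+{\bf r}'_{21})k$ and the $\delta$-coefficient must equal $-k\partial_x{\bf r}'_{12}+[U'_1,{\bf r}'_{12}]-[U'_2,{\bf r}'_{21}]$, with ${\bf r}'_{12}$ as in (\ref{e92}). The commutator structure and the $-\tfrac12[\cdots]$ piece emerge exactly as in the finite-dimensional identity, so the hard part will be the derivative term: I must show that the contributions of the connection pieces $\mathcal{B}_i$, together with the $\delta$-remainders produced when (\ref{e70}) moves the $y$-arguments of the conjugated $\delta'$ term, combine precisely into $-k\partial_x$ of the dressed $r$-matrix $G_1G_2(\cdots)G_1^{-1}G_2^{-1}$. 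This is where the affine shift $k(\partial_x G)G^{-1}$ and the derivative term of the Maillet bracket become intertwined, and verifying that $\partial_x$ of the $G$-dressing reproduces exactly these remainders (rather than leaving a leftover term) is the main obstacle. Once this matching is confirmed, the bracket is back in Maillet form (\ref{e11}) with ${\bf r}'_{12}$ given by (\ref{e92}), which proves the Lemma.
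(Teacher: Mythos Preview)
Your brute-force expansion is correct in principle, but the paper bypasses almost all of it with one structural observation. Rewrite the Maillet bracket (\ref{e11}) in operator form as
\[
\{\nabla U_1(z,x),\nabla U_2(w,y)\}=[\nabla U_1(z,x),{\bf r}_{12}(z,w|x)\delta(x-y)]-[\nabla U_2(w,y),{\bf r}_{21}(w,z|x)\delta(x-y)],
\]
with $\nabla U(z,x)=-k\partial_x+U(z,x)$; a two-line check (letting $\partial_x$ act on ${\bf r}_{12}\delta$ and $\partial_y$ on ${\bf r}_{21}\delta$) shows this is literally equivalent to (\ref{e11}). In this language the gauge transformation (\ref{e9102}) is a pure conjugation $\nabla U\mapsto G\,\nabla U\,G^{-1}$ with no affine piece, and the bracket has exactly the shape of the finite-dimensional relation (\ref{e04}). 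Hence the known finite-dimensional formula (\ref{e42}) applies verbatim with $L\to\nabla U$ and $r_{12}\to{\bf r}_{12}\delta(x-y)$, and the ultra-locality assumption (\ref{e9101}) lets one strip the $\delta$'s to recover (\ref{e92}). Your route keeps the connection piece $\mathcal{B}_i=k(\partial G_i)G_i^{-1}$ separate and then has to track, via (\ref{e70})--(\ref{e72}), how all the induced $\delta'$ and $\delta''$ contributions recombine into $-k\partial_x{\bf r}'_{12}$; the paper's reformulation packages both the connection shift and the $-k\partial_x{\bf r}_{12}$ term into the single operator $\nabla U$, so the ``main obstacle'' you flag simply evaporates. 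What your approach buys is an explicit distributional verification; what the paper's buys is a one-line reduction to the already established finite-dimensional statement.
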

\begin{proof}
Let us represent the Maillet bracket (\ref{e11}) in the finite-dimensional form
(\ref{e04}):
  \beq\label{e93}
  \begin{array}{c}
  \displaystyle{
\{\nabla U_1(z,x),\nabla U_2(w,y)\}=
}
\\ \ \\
  \displaystyle{
 =\Big([\nabla U_1(z,x),{\bf r}_{12}(z,w|x)\delta(x-y)]-[\nabla U_2(w,y),{\bf r}_{21}(w,z|x)\delta(x-y)]\Big)\,,
  }
 \end{array}
 \eq
where $\nabla U(z,x)=-k\p_x+U(z,x)$. Indeed,
  \beq\label{e931}
  \begin{array}{c}
  \displaystyle{
[\nabla U_1(z,x),{\bf r}_{12}(z,w|x)\delta(x-y)]=[U_1(z,x),{\bf r}_{12}(z,w|x)]\delta(x-y)-
  }
  \\ \ \\
  \displaystyle{
-k(\p_x{\bf r}_{12}(z,w|x))\delta(x-y)
-{\bf r}_{12}(z,w|x)k\delta'(x-y)
  }
 \end{array}
 \eq
 and
  \beq\label{e932}
  \begin{array}{c}
  \displaystyle{
[\nabla U_2(w,y),{\bf r}_{21}(w,z|x)\delta(x-y)]=[U_2(w,y),{\bf r}_{21}(w,z|x)]\delta(x-y)
+{\bf r}_{21}(w,z|x)k\delta'(x-y)\,.
  }
 \end{array}
 \eq
 The difference of (\ref{e931}) and (\ref{e932}) yields (\ref{e11}).

Thus we have $\nabla U(z,x)\rightarrow G(z,x)\nabla U(z,x)G^{-1}(z,x)$, and the gauge transformation formulae are the same as in the
finite-dimensional case.
In fact, with the conditions (\ref{e9101}) the gauge transformation can be also rewritten as
\beq\label{e9321}
  \begin{array}{c}
  \displaystyle{
 {\bf r}_{12}(z,w|x)\delta(x-y)\longrightarrow
 G_1(z,x)G_2(w,x)\Big(
 {\bf r}_{12}(z,w|x)\delta(x-y)-
 }
 \end{array}
 \eq
 $$
 -G_1^{-1}(z,x)\{G_1(z,x),U_2(w,y)\}-
 $$
 $$
-\frac12\,\Big[G_1^{-1}(z,x)G_2^{-1}(w,x)\{G_1(z,x),G_2(w,y)\},U_2(w,x)\Big]
 \Big)G^{-1}_1(z,x)G^{-1}_2(w,x)\,.
 $$
\end{proof}

Consider the special gauge transformation
\beq\label{e933}
  \begin{array}{c}
  \displaystyle{
{\bar G}(z,x)=\frac{1}{[\det g(z,q(x))]^{1/N}}\,g(z,q(x))\,,
 }
 \end{array}
 \eq
where $g(z,q(x))$ is the intertwining matrix (\ref{e89})-(\ref{e91}) from the IRF-Vertex correspondence
with the dynamical parameters $q_1(x),...,q_N(x)$. Let us formulate the continuous classical version of the IRF-Vertex relation.
\begin{theor}
 The following relation holds true:
\beq\label{e94}
  \begin{array}{c}
  \displaystyle{
 {\bf r}^{\hbox{\tiny{BD}}}_{12}(z,w|x)={r}^{\hbox{\tiny{BD}}}_{12}(z,w)= {\bar G}_1(z,x){\bar G}_2(w,x)\Big(
 {\bar {\bf r}}^{\hbox{\tiny{2dCM}}}_{12}(z,w|x)-
 }
 \\ \ \\
  \displaystyle{
-{\bar G}_1^{-1}(z,x)\langle\{{\bar G}_1,{\bar U}^{\hbox{\tiny{2dCM}}}_2\}\rangle(z,w,x)\Big){\bar G}^{-1}_1(z,x){\bar G}^{-1}_2(w,x)
 }
 \end{array}
 \eq
 where $ {\bf r}^{\hbox{\tiny{BD}}}_{12}(z,w|x)$ is the Belavin-Drinfeld $r$-matrix,
 ${\bar {\bf r}}^{\hbox{\tiny{2dCM}}}_{12}(z,w|x)$ is the $r$-matrix (\ref{e784}),
 the gauge transformation ${\bar G}(z,x)$ is given by (\ref{e933}) and
 the matrix ${\bar U}^{\hbox{\tiny{2dCM}}}$ is the one from (\ref{e781}).
 \end{theor}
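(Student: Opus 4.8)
The plan is to read off the transformed $r$-matrix from the gauge-transformation Lemma (formula (\ref{e92})) applied to the intertwiner $\bar G$ of (\ref{e933}), and then to identify the result with the non-dynamical Belavin--Drinfeld $r$-matrix by reducing the whole computation to the already-established finite-dimensional classical IRF--Vertex relation. The key structural point, which makes this reduction possible, is that the field-theoretic $\partial_x$-contribution is carried entirely by the covariant-derivative combination $\nabla U=-k\partial_x+U$ used in the proof of the Lemma (equation (\ref{e93})): under the transformation (\ref{e9102}) one has $\nabla U\mapsto \bar G\,\nabla U\,\bar G^{-1}$, so the term $k\partial_x\bar G\,\bar G^{-1}$ never appears explicitly and the algebra acquires exactly the same shape as in finite-dimensional mechanics.

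First I would verify that $\bar G$ satisfies the ultra-local bracket conditions (\ref{e9101}). By (\ref{e89})--(\ref{e91}) the matrix $\bar G(z,x)$ depends on the dynamical variables only through the position fields $q_1(x),\dots,q_N(x)$, and since $\{q_i(x),q_j(y)\}=0$ by (\ref{e08}) we get $\langle\{\bar G_1,\bar G_2\}\rangle=0$. This is precisely why the general formula (\ref{e92}) loses its $-\tfrac12[\,\cdot\,,U_2]$ commutator term and collapses to the right-hand side of (\ref{e94}). Next I would compute $\langle\{\bar G_1,\bar U^{\hbox{\tiny{2dCM}}}_2\}\rangle$: because $\bar G$ is a function of the $q$'s alone it Poisson-commutes with the $\bar\alpha_j^2$, with the derivative term $-k\bar\alpha_{i,x}/\bar\alpha_i$ and with the Kronecker factors $\phi(w,q_i-q_j)$ in $\bar U^{\hbox{\tiny{2dCM}}}$ of (\ref{e781}); the only surviving contribution comes from pairing $\bar G$ against the momenta $p_c$ on the diagonal of $\bar U_2$ through $\{q_i(x),p_c(y)\}=-\delta_{ic}\delta(x-y)$, giving the coefficient $\langle\{\bar G_1,\bar U_2\}\rangle(z,w,x)=-\sum_c(\partial_{q_c}\bar G)\otimes E_{cc}$. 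In particular this bracket is ultra-local and is identical to the finite-dimensional bracket $\{g_1,L_2\}$ entering the gauge formula (\ref{e42}), since the $p$-part of $\bar U^{\hbox{\tiny{2dCM}}}$ coincides with that of the Calogero--Moser Lax matrix.

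With these inputs the Lemma guarantees that the gauge-transformed $U$-matrix --- which is of the Sklyanin type (\ref{e88}) governed by the non-dynamical $r^{\hbox{\tiny{BD}}}$ --- satisfies the Maillet bracket with the $r$-matrix appearing on the right-hand side of (\ref{e94}). It then remains to prove that this $\Mat^{\otimes 2}$-valued expression equals $r^{\hbox{\tiny{BD}}}_{12}(z,w)$. Here the observation of the previous paragraph shows that, once the common factor $\delta(x-y)$ is stripped, the identity to be verified is formally a verbatim copy of the finite-dimensional IRF--Vertex relation with the positions $q_i$ replaced by the fields $q_i(x)$: the center-of-mass $r$-matrix $\bar{\bf r}^{\hbox{\tiny{2dCM}}}$ of (\ref{e784})--(\ref{e785}) in place of its finite-dimensional counterpart, the momentum part of $\bar U^{\hbox{\tiny{2dCM}}}$ in place of the finite-dimensional Lax matrix, and the normalized intertwiner $\bar G$ in place of $g/[\det g]^{1/N}$. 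I would therefore invoke the finite-dimensional IRF--Vertex correspondence (Appendix B, after \cite{LOZ,ZS}) to conclude the equality, and note that the result is $x$-independent, consistent with (\ref{e852}).

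The main obstacle is the trace and center-of-mass bookkeeping inside that finite-dimensional identity. The intertwiner is normalized so that $\det\bar G=1$, i.e. $\bar G\in{\rm SL}_N$, matching the ${\rm SL}_N$ Belavin--Drinfeld structure of $r^{\hbox{\tiny{BD}}}$; correspondingly the subtraction of $\bar l_2=1_N\otimes\bar l$ in $\bar{\bf r}^{\hbox{\tiny{2dCM}}}$ is what converts the ``non-difference'' pieces $E_1(z)$, $E_1(w)$ of the dynamical Calogero--Moser $r$-matrix (\ref{e26}) into the single difference term $E_1(z-w)\,\tfrac{1_N\otimes1_N}{N}$ of (\ref{e85}). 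Concretely, I expect to need the logarithmic-derivative theta identities for $\det\Xi(z,q)$ and for $\prod_{k\neq j}\vartheta(q_j-q_k)$ from (\ref{e90})--(\ref{e91}), together with the functional identities underlying the intertwining relation, in order to show that the scalar-normalization contribution and the $\bar l_2$ subtraction precisely reproduce the trace part of $r^{\hbox{\tiny{BD}}}$ while all remaining $q(x)$-dependence cancels. The reassuring check is that no $x$-derivative of the fields can survive: the $\partial_x$ was absorbed by $\nabla U$, and the bracket term sees only the momentum part of $\bar U^{\hbox{\tiny{2dCM}}}$, so the computation genuinely collapses onto the finite-dimensional one, where the desired identity is known.
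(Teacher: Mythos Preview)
Your proposal is correct and follows essentially the same route as the paper: observe that $\bar G$ depends only on the $q_i(x)$ so that $\langle\{\bar G_1,\bar G_2\}\rangle=0$ and $\langle\{\bar G_1,\bar U^{\hbox{\tiny{2dCM}}}_2\}\rangle$ sees only the momentum part $P_2$, whence the identity (\ref{e94}) collapses pointwise in $x$ to the finite-dimensional classical IRF--Vertex relation (\ref{e929}). The paper in fact offers no more than this---it simply remarks after the theorem that $\bar U^{\hbox{\tiny{2dCM}}}_2$ may be replaced by $P_2$ and that the statement is then a ``straightforward field generalization'' of (\ref{e929})---so your write-up is if anything more detailed than the original, particularly in making explicit the role of the $\bar l_2$ subtraction and the $\det\bar G=1$ normalization in matching the trace part of $r^{\hbox{\tiny{BD}}}$.
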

In fact, ${\bar U}^{\hbox{\tiny{2dCM}}}(w,x)$ can be replaced with $P={\rm  diag}(p_1(x),...,p_N(x))$
since only this part of ${\bar U}^{\hbox{\tiny{2dCM}}}(w,x)$ provides a non-trivial input into the bracket.
The upper statement is a straightforward field generalization of the relation (\ref{e929})
in the finite-dimensional case.

Summarizing, we argued that the well-know relation between dynamical and non-dynamical $r$-matrices
remains the same at the level of field theories, and the corresponding gauge transformation
is exactly the intertwining matrix coming from the IRF-Vertex correspondence in quantum statistical
models.




\section{Appendix A: elliptic functions}\label{secA}
\def\theequation{A.\arabic{equation}}
\setcounter{equation}{0}

We use the Kronecker elliptic function \cite{Weil}:
 \beq\label{a01}
  \begin{array}{l}
  \displaystyle{
 \phi(z,u)=\frac{\vth'(0)\vth(z+u)}{\vth(z)\vth(u)}=\phi(u,z)\,,
  \qquad \phi(-z, -u) = -\phi(z, u)\,,
 }
 \end{array}
 \eq
where $\vth(z)$ is the first Jacobi theta-function,  or  using the Riemann notations it is
 \beq\label{a02}
 \begin{array}{c}
  \displaystyle{
\vth(z)=\vth(z,\tau)\equiv-\theta{\left[\begin{array}{c}
1/2\\
1/2
\end{array}
\right]}(z|\, \tau )\,,
 }
 \end{array}
 \eq
\beq\label{a03}
 \begin{array}{c}
  \displaystyle{
\theta{\left[\begin{array}{c}
a\\
b
\end{array}
\right]}(z|\, \tau ) =\sum_{j\in \z}
\exp\left(2\pi\imath(j+a)^2\frac\tau2+2\pi\imath(j+a)(z+b)\right)\,,\quad {\rm Im}(\tau)>0\,,
}
 \end{array}
 \eq
where $\tau$ is the elliptic moduli.
The derivative $f(z,u) = \partial_u \vf(z,u)$ is given by
\beq\label{a04}
\begin{array}{c} \displaystyle{

    f(z, u) = \phi(z, u)(E_1(z + u) - E_1(u)), \qquad f(-z, -u) = f(z, u)
}\end{array}\eq
in terms of the first
  Eisenstein function:
\beq\label{a05}
\begin{array}{c} \displaystyle{
    E_1(z)=\frac{\vth'(z)}{\vth(z)}=\zeta(z)+\frac{z}{3}\frac{\vth'''(0)}{\vth'(0)}\,,
    \quad
    E_2(z) = - \partial_z E_1(z) = \wp(z) - \frac{\vartheta'''(0) }{3\vartheta'(0)}\,,
}\end{array}\eq
\beq\label{a06}
\begin{array}{c}
 \displaystyle{
    E_1(- z) = -E_1(z)\,, \quad E_2(-z) = E_2(z)\,,
}\end{array}
\eq
where $\wp(z)$ and $\zeta(z)$ are the Weierstrass functions. The second order derivative
$f'(z,u) = \partial^2_u \phi(z,u)$ is
\beq\label{a061}
\begin{array}{c}
 \displaystyle{
    f'(z,u)=\phi(z,u)\Big(\wp(z)-E_1^2(z)+2\wp(u)-2E_1(z)E_1(u)+2E_1(z+u)E_1(z)\Big)=
 }
 \\
 \displaystyle{
   =2\Big(\wp(u)-\rho(z)\Big)\phi(z,u)+2E_1(z)f(z,u) \,,
}\end{array}
\eq
where
\beq\label{a062}
\begin{array}{c}
\displaystyle{
\rho (z) = \frac{E^2_1(z) - \wp(z)}{2}\,.
}
\end{array}
\eq
 The defined above functions satisfy the widely known addition formulae:
\beq\label{a07}
  \begin{array}{c}
  \displaystyle{
  \phi(z_1, u_1) \phi(z_2, u_2) = \phi(z_1, u_1 + u_2) \phi(z_2 - z_1, u_2) + \phi(z_2, u_1 + u_2) \phi(z_1 - z_2, u_1)\,,
 }
 \end{array}
 \eq
\beq\label{a08}
  \begin{array}{c}
  \displaystyle{
 \phi(z,u_1)\phi(z,u_2)=\phi(z,u_1+u_2)\Big(E_1(z)+E_1(u_1)+E_1(u_2)-E_1(z+u_1+u_2)\Big)\,,
 }
 \end{array}
 \eq
\beq\label{a09}
  \begin{array}{c}
  \displaystyle{
  \phi(z, u_1) f(z,u_2)-\phi(z, u_2) f(z,u_1)=\phi(z,u_1+u_2)\Big(\wp(u_1)-\wp(u_2)\Big)\,,
 }
 \end{array}
 \eq
\beq\label{a10}
  \begin{array}{c}
  \displaystyle{
  \phi(z, u) \phi(z, -u) = \wp(z)-\wp(u)=E_2(z)-E_2(u)\,,
 }
 \end{array}
 \eq
\beq\label{a11}
  \begin{array}{c}
  \displaystyle{
  \phi(z, u) f(z, -u)-\phi(z, -u) f(z, u)=\wp'(u)\,.
 }
 \end{array}
 \eq
We also need the identity
\beq\label{a12}
  \begin{array}{c}
  \displaystyle{
   \frac{1}{2}\,\frac{\wp'(z)-\wp'(w)}{\wp(z)-\wp(w)}=\zeta(z+w)-\zeta(z)-\zeta(w)=
   E_1(z+w)-E_1(z)-E_1(w)
 }
 \end{array}
 \eq
and
\beq\label{a13}
  \begin{array}{c}
  \displaystyle{
   \Big(\zeta(z+w)-\zeta(z)-\zeta(w)\Big)^2=\wp(z)+\wp(w)+\wp(z+w)\,.
 }
 \end{array}
 \eq
%

\section{Appendix B: elliptic $R$-matrices}\label{secB}
\def\theequation{B.\arabic{equation}}
\setcounter{equation}{0}

For the readers convenience
here we collect a brief description of $R$-matrices under consideration and their interrelations.

\subsection*{Quantum $R$-matrices}

\paragraph{Non-dynamical Baxter-Belavin $R$-matrix.} We begin with the elliptic Baxter-Belavin $R$-matrix \cite{BB}.
It is defined in the matrix basis (\ref{e86})-(\ref{e87}):
 \beq\label{e901}
 \begin{array}{c}
  \displaystyle{
 R^{\hbar}_{12}(z,w)=\frac{1}{N}\sum\limits_{\substack{a\in\,
 \mZ_{ N}\times\mZ_{ N} }} T_a\otimes T_{-a}
 \exp \Big(2\pi\imath\,\frac{a_2(z-w)}{N}\Big)\,\phi\Big(z-w,\frac{a_1+a_2\tau}{N}+\frac{\hbar}{N}\Big)
 }
 \end{array}
 \eq
and satisfies the quantum Yang-Baxter equation:
  \beq\label{e902}
  \begin{array}{c}
  \displaystyle{
 R^\hbar_{12}(z_1,z_2)R^\hbar_{13}(z_1,z_3)R^\hbar_{23}(z_2,z_3)
 =R^\hbar_{23}(z_2,z_3)R^\hbar_{13}(z_1,z_3)R^\hbar_{12}(z_1,z_2)\,.
 }
 \end{array}
 \eq
Besides \cite{BB} more properties for this $R$-matrix
can be found in \cite{Baxter2}, \cite{LOZ15} and in the Appendix of \cite{ZZ}.

\paragraph{Dynamical Felder's elliptic $R$-matrix.} The next is the quantum dynamical $R$-matrix
introduced by G. Felder \cite{Felder2}:
 \beq\label{e903}
 \begin{array}{c}
  \displaystyle{
 R^{\hbox{\tiny{F}}}_{12}(\hbar,z_1,z_2|\,q)= \sum\limits_{i,j:\,i\neq j}^N
 E_{ii}\otimes E_{jj}\, \phi(\hbar,-q_{ij})+
 }
\\
  \displaystyle{
+\sum\limits_{i,j:\,i\neq j}^N
 E_{ij}\otimes E_{ji}\, \phi(z_1-z_2,q_{ij})+\phi(\hbar,z_1-z_2)\sum\limits_{i=1}^N
 E_{ii}\otimes E_{ii}\,.
 }
 \end{array}
 \eq
It satisfies the quantum dynamical Yang-Baxter equation (or the
Gervais-Neveu-Felder equation):
  \beq\label{e904}
  \begin{array}{c}
  \displaystyle{
 R^{\hbox{\tiny{F}}}_{12}(\hbar,z_1,z_2|\,q)
 R^{\hbox{\tiny{F}}}_{13}(\hbar,z_1,z_3|\,q-\hbar^{(2)})R^{\hbox{\tiny{F}}}_{23}(\hbar,z_2,z_3|\,q)=\hspace{40mm}
}
\\ \ \\
  \displaystyle{
\hspace{40mm}
=R^{\hbox{\tiny{F}}}_{23}(\hbar,z_2,z_3|\,q-\hbar^{(1)})
R^{\hbox{\tiny{F}}}_{13}(\hbar,z_1,z_3|\,q)R^{\hbox{\tiny{F}}}_{12}(\hbar,z_1,z_2|\,q-\hbar^{(3)})\,.
 }
 \end{array}
 \eq
 The shifts of arguments $q_i$ are performed in the following way:
  \beq\label{e905}
  \begin{array}{c}
  \displaystyle{
R^{\hbox{\tiny{F}}}_{12}(\hbar,z_1,z_2|\,q+\hbar^{(3)})=P_3^\hbar\,
R^{\hbox{\tiny{F}}}_{12}(\hbar,z_1,z_2|\,q)\, P_3^{-\hbar} \,,\quad
P_3^\hbar=\sum\limits_{k=1}^N 1_N\otimes 1_N\otimes E_{kk}
\exp\Big(\hbar\frac{\p}{\p q_k}\Big)\,.
 }
 \end{array}
 \eq

 \paragraph{IRF-Vertex correspondence.}
 The relation between dynamical and non-dynamical $R$-matrices was found by R.J. Baxter \cite{Baxter2}
  and is called the IRF-Vertex correspondence (or the IRF-Vertex relation between quantum $R$-matrices):
 \beq\label{e906}
 \begin{array}{c}
  \displaystyle{
g_2(z_2,q)\,
g_1(z_1,q+\hbar^{(2)})\,R^{\hbox{\tiny{F}}}_{12}(\hbar,z_1-z_2|\,q)=R^\hbar_{12}(\hbar,z_1-z_2)
g_1(z_1,q)\, g_2(z_2,q+\hbar^{(1)})\,,
 }
 \end{array}
 \eq
 where $g(z,q)$ is the matrix (\ref{e89}) and the shifts of arguments are defined similarly to (\ref{e905}):
  \beq\label{e907}
  \begin{array}{c}
  \displaystyle{
g_1(z_1,q+\hbar^{(2)})=P_2^{\hbar}\,
g_1(z_1,q) P_2^{-\hbar}\in\Mat^{\otimes 2} \,,\quad
P_2^\hbar=\sum\limits_{k=1}^N 1_{N}\otimes E_{kk}
\exp\Big(\hbar\frac{\p}{\p q_k}\Big)\,.
 }
 \end{array}
 \eq

\subsection*{Classical $r$-matrices}
\paragraph{Non-dynamical Belavin-Drinfeld $r$-matrix.}
In the quasi-classical limit $\hbar\to 0$ the Baxter-Belavin
$R$-matrix (\ref{e901}) has the expansion
  \beq\label{e908}
  \begin{array}{c}
  \displaystyle{
R^\hbar(z_1,z_2)=\frac{1}{\hbar}\,1_N\otimes 1_N+r^{\hbox{\tiny{BD}}}_{12}(z_1,z_2)+O(\hbar)\,,
 }
 \end{array}
 \eq
where $r^{\hbox{\tiny{BD}}}_{12}(z_1,z_2)$ is the Belavin-Drinfeld $r$-matrix \cite{BD} (\ref{e85}).
The Belavin-Drinfeld $r$-matrix satisfies the classical Yang-Baxter equation
  \beq\label{e909}
  \begin{array}{c}
  \displaystyle{
[r^{\hbox{\tiny{BD}}}_{12}(z_1,z_2),r^{\hbox{\tiny{BD}}}_{13}(z_1,z_3)]+
[r^{\hbox{\tiny{BD}}}_{12}(z_1,z_2),r^{\hbox{\tiny{BD}}}_{23}(z_2,z_3)]+
[r^{\hbox{\tiny{BD}}}_{13}(z_1,z_3),r^{\hbox{\tiny{BD}}}_{23}(z_2,z_3)]=0\,.
 }
 \end{array}
 \eq
It is deduced from the quantum Yang-Baxter equation (\ref{e902}) in the limit (\ref{e908}).
Let us mention that from the classical mechanics viewpoint the Yang-Baxter equation (\ref{e909})
is a sufficient condition for the Jacobi identity $\{\{L_1(z_1),L_2(z_2)\},L_3(z_3)\}+{\rm cycl.}=0$
for non-dynamical skew-symmetric (i.e. $r_{12}(z_1,z_2)=-r_{21}(z_2,z_1)$) $r$-matrix and the classical
$r$-matrix structure
  \beq\label{e9091}
  \begin{array}{c}
  \displaystyle{
\{L_1(z_1),L_2(z_2)\}=[L_1(z_1)+L_2(z_2),r_{12}(z_1,z_2)]\,.
 }
 \end{array}
 \eq
In the general case (when non-dynamical $r$-matrix is not skew-symmetric) the classical
exchange relation is
  \beq\label{e9092}
  \begin{array}{c}
  \displaystyle{
\{L_1(z_1),L_2(z_2)\}=[L_1(z_1),r_{12}(z_1,z_2)]-[L_2(z_2),r_{21}(z_2,z_1)]
 }
 \end{array}
 \eq
and the classical Yang-Baxter equation has the form:
  \beq\label{e9093}
  \begin{array}{c}
  \displaystyle{
[r_{12}(z_1,z_2),r_{13}(z_1,z_3)]+
[r_{12}(z_1,z_2),r_{23}(z_2,z_3)]+
[r_{32}(z_3,z_2),r_{13}(z_1,z_3)]=0\,.
 }
 \end{array}
 \eq
For skew-symmetric non-dynamical $r$-matrices (\ref{e9093}) turns into (\ref{e909})
and (\ref{e9092}) turns into (\ref{e9091}).

\paragraph{Dynamical $r$-matrices: classical Felder's $r$-matrix.} Consider first the quasi-classical limit of
the Felder's $R$-matrix (\ref{e903}):
  \beq\label{e910}
  \begin{array}{c}
  \displaystyle{
R^{\hbox{\tiny{F}}}(\hbar,z_1,z_2|\, q)=\frac{1}{\hbar}\,1_N\otimes 1_N
+r^{\hbox{\tiny{F}}}_{12}(z_1,z_2)+O(\hbar)\,,
 }
 \end{array}
 \eq
where
\beq\label{e911}
  \begin{array}{c}
  \displaystyle{
 r^{\hbox{\tiny{F}}}_{12}(z,w)=
 E_1(z-w)\,\sum\limits_{i=1}^NE_{ii}\otimes E_{ii}+
 \sum\limits^N_{i\neq j}\phi(z-w,q_{ij})\,E_{ij}\otimes E_{ji}
 -\sum\limits^N_{i\neq j}E_1(q_{ij})E_{ii}\otimes E_{jj}\,.
 }
 \end{array}
 \eq
In the limit $\hbar \to 0$ the quantum relation (\ref{e904}) provides
the classical dynamical Yang-Baxter equation:
  \beq\label{e912}
  \begin{array}{c}
  \displaystyle{
[r^{\hbox{\tiny{F}}}_{12}(z_1,z_2),r^{\hbox{\tiny{F}}}_{13}(z_1,z_3)]+
[r^{\hbox{\tiny{F}}}_{12}(z_1,z_2),r^{\hbox{\tiny{F}}}_{23}(z_2,z_3)]+
[r^{\hbox{\tiny{F}}}_{13}(z_1,z_3),r^{\hbox{\tiny{F}}}_{23}(z_2,z_3)]+
}
\\ \ \\
  \displaystyle{
  +[\mD^{(1)},r^{\hbox{\tiny{F}}}_{23}(z_2,z_3)]
  -[\mD^{(2)},r^{\hbox{\tiny{F}}}_{13}(z_1,z_3)]
  +[\mD^{(3)},r^{\hbox{\tiny{F}}}_{12}(z_1,z_2)]=0
\,,
 }
 \end{array}
 \eq
where
\beq\label{e913}
  \begin{array}{c}
  \displaystyle{
 \mD^{(1)}=\sum\limits_{k=1}^N E_{kk}\otimes 1_N\otimes 1_N
\frac{\p}{\p q_k}\,,
\quad
 \mD^{(2)}=\sum\limits_{k=1}^N 1_N\otimes  E_{kk}\otimes 1_N
\frac{\p}{\p q_k}
 }
 \end{array}
 \eq
and similarly for $\mD^{(3)}$. Notice also that
\beq\label{e914}
  \begin{array}{c}
  \displaystyle{
 [\mD^{(1)},r^{\hbox{\tiny{F}}}_{23}(z_2,z_3)]
 =\{P_1,r^{\hbox{\tiny{F}}}_{23}(z_2,z_3)\}\,,
 }
 \end{array}
 \eq
where $P_1=P\otimes 1_N\otimes 1_N$, $P={\rm diag}(p_1,...,p_N)\in\Mat$. So that
(\ref{e912}) takes the form:
  \beq\label{e915}
  \begin{array}{c}
  \displaystyle{
[r^{\hbox{\tiny{F}}}_{12}(z_1,z_2),r^{\hbox{\tiny{F}}}_{13}(z_1,z_3)]+
[r^{\hbox{\tiny{F}}}_{12}(z_1,z_2),r^{\hbox{\tiny{F}}}_{23}(z_2,z_3)]+
[r^{\hbox{\tiny{F}}}_{13}(z_1,z_3),r^{\hbox{\tiny{F}}}_{23}(z_2,z_3)]+
}
\\ \ \\
  \displaystyle{
  +\{P_1,r^{\hbox{\tiny{F}}}_{23}(z_2,z_3)\}
  -\{P_2,r^{\hbox{\tiny{F}}}_{13}(z_1,z_3)\}
  +\{P_3,r^{\hbox{\tiny{F}}}_{12}(z_1,z_2)\}=0
\,.
 }
 \end{array}
 \eq
\paragraph{Dynamical $r$-matrix for the (multi)spin Calogero-Moser model.}
In this paper we use the classical dynamical $r$-matrix $r^{\hbox{\tiny{spin}}}_{12}(z,w)$
(\ref{e40}) for description of the spin and multispin Calogero-Moser models.
It is related to the classical Felder's $r$-matrix (\ref{e911}) as follows:
\beq\label{e916}
  \begin{array}{c}
  \displaystyle{
 r^{\hbox{\tiny{spin}}}_{12}(z,w)=r^{\hbox{\tiny{F}}}_{12}(z,w)+\delta r_{12}(z,w)\,,
 \qquad
 \delta r_{12}(z,w)=\sum\limits^N_{i\neq j}E_1(q_{ij})E_{ii}\otimes E_{jj}\,.
 }
 \end{array}
 \eq
The difference $\delta r_{12}(z,w)$ between two $r$-matrices is in fact a twist.
See e.g. Lemma 1 in \cite{LOSZ}. The latter means that both $r$-matrices $r^{\hbox{\tiny{F}}}_{12}(z,w)$
and $r^{\hbox{\tiny{spin}}}_{12}(z,w)$
satisfy the classical dynamical Yang-Baxter equation (\ref{e915}).

\paragraph{Dynamical $r$-matrix for the spinless Calogero-Moser model.}
For description of the spinless Calogero-Moser model we use the $r$-matrix (\ref{e26}) \cite{Skl3,BradenSuz}.
The Jacobi identity for the Lax matrix
\beq\label{e917}
  \begin{array}{c}
  \displaystyle{
 \{\{L_1(z_1),L_2(z_2)\},L_3(z_3)\}+{\rm cycl.}=0
 }
 \end{array}
 \eq
 is rewritten in the form
\beq\label{e918}
  \begin{array}{c}
  \displaystyle{
 [\mR_{123},L_1(z_1)]+[\mR_{231},L_2(z_2)]+[\mR_{312},L_3(z_3)]=0\,,
 }
 \end{array}
 \eq
 where
\beq\label{e919}
  \begin{array}{c}
  \displaystyle{
\mR_{123}=[r^{\hbox{\tiny{CM}}}_{12}(z_1,z_2),r^{\hbox{\tiny{CM}}}_{13}(z_1,z_3)]+
[r^{\hbox{\tiny{CM}}}_{12}(z_1,z_2),r^{\hbox{\tiny{CM}}}_{23}(z_2,z_3)]+
[r^{\hbox{\tiny{CM}}}_{32}(z_3,z_2),r^{\hbox{\tiny{CM}}}_{13}(z_1,z_3)]-
}
\\ \ \\
  \displaystyle{
  -\{P_2,r^{\hbox{\tiny{CM}}}_{13}(z_1,z_3)\}
  +\{P_3,r^{\hbox{\tiny{CM}}}_{12}(z_1,z_2)\}\in\Mat^{\otimes 3}\,,
 }
 \end{array}
 \eq
 and $\mR_{123}$ does not vanish. It is equal to a certain expression. See details in \cite{Skl3}.
 In principle, this expression can be deduced by representing $r^{\hbox{\tiny{CM}}}_{12}(z_1,z_2)$ in the form
 $r^{\hbox{\tiny{CM}}}_{12}(z_1,z_2)=r^{\hbox{\tiny{spin}}}_{12}(z_1,z_2)+\Delta r_{12}(z_2)$ and
 using (\ref{e915}) for $r^{\hbox{\tiny{spin}}}_{12}(z_1,z_2)$.

We also use the modified version of the $r$-matrix $r^{\hbox{\tiny{CM}}}_{12}(z_1,z_2)$
related to the Lie algebra ${\rm sl}_N$:
\beq\label{e920}
  \begin{array}{c}
  \displaystyle{
 {\tilde r}^{\hbox{\tiny{CM}}}_{12}(z,w)=r^{\hbox{\tiny{CM}}}_{12}(z,w)-l_2(w)\,,
 }
 \end{array}
 \eq
where $l_2(w)=1_N\otimes l(w)$ with
\beq\label{e921}
  \begin{array}{c}
  \displaystyle{
 l(w)=g^{-1}(w,q)\p_w g(w,q)\in\Mat
 }
 \end{array}
 \eq
 or, explicitly\footnote{Explicit expression (\ref{e922}) is related to factorization of Lax matrices.
 See \cite{VZ} and \cite{ZZ}.}
\beq\label{e922}
  \begin{array}{c}
  \displaystyle{
 l_{ij}(w)=\frac{\delta_{ij}}{N}\Big(E_1(w)-\sum\limits_{k\neq i}^N E_1(q_{ik})\Big)
 +\frac{1-\delta_{ij}}{N}\,\phi(w,q_{ij})\,.
 }
 \end{array}
 \eq
 Notice that by definition (\ref{e920}) the $r$-matrix ${\tilde r}^{\hbox{\tiny{CM}}}_{12}(z,w)$
 also satisfies the classical exchange relation (\ref{e04}) since $[L_1(z),l_2(w)]=0$.

 Finally, for the IRF-Vertex relation in the field theory case we use
\beq\label{e9201}
  \begin{array}{c}
  \displaystyle{
 {\bar r}^{\hbox{\tiny{CM}}}_{12}(z,w)=r^{\hbox{\tiny{CM}}}_{12}(z,w)-{\bar l}_2(w)\,,
 }
 \end{array}
 \eq
 where ${\bar l}_2(w)=1_N\otimes {\bar l}(w)$ with ${\bar l}(w)\in\Mat$ as follows:
\beq\label{e9221}
  \begin{array}{c}
  \displaystyle{
 {\bar l}_{ij}(w)=\frac{\delta_{ij}}{N}\,E_1(w)
 +\frac{1-\delta_{ij}}{N}\,\phi(w,q_{ij})\,.
 }
 \end{array}
 \eq

\subsection*{Classical IRF-Vertex relations}
In the quasi-classical limits $\hbar\to 0$ (\ref{e908}) and (\ref{e910}) the
IRF-Vertex correspondence (\ref{e906}) provides the following relation between
the classical dynamical $r$-matrix (\ref{e911}) and the Belavin-Drinfeld non-dynamical
$r$-matrix (\ref{e85}):
\beq\label{e923}
  \begin{array}{c}
  \displaystyle{
 r^{\hbox{\tiny{BD}}}_{12}(z,w)=
 g_1(z,q)g_2(w,q)r^{\hbox{\tiny{F}}}_{12}(z,w)g_1^{-1}(z,q)g_2^{-1}(w,q)-
 }
 \\ \ \\
   \displaystyle{
 -g_2(w,q)\{g_1(z,q),P_2\}g_1^{-1}(z,q)g_2^{-1}(w,q)
 -g_1(z,q)\{P_1,g_2(w,q)\}g_1^{-1}(z,q)g_2^{-1}(w,q)
 }
 \end{array}
 \eq
 with the matrix $g(z,q)$ (\ref{e89})-(\ref{e91}).

The gauge equivalence between Euler-Arnold tops and Calogero-Moser models \cite{LOZ,VZ,AtZ1,AtZ3}
provides the following relation between the classical $r$-matrices (\ref{e85}) and (\ref{e920}):
\beq\label{e924}
  \begin{array}{c}
  \displaystyle{
 r^{\hbox{\tiny{BD}}}_{12}(z,w)=
 g_1(z,q)g_2(w,q){\tilde r}^{\hbox{\tiny{CM}}}_{12}(z,w)g_1^{-1}(z,q)g_2^{-1}(w,q)-
 }
 \\ \ \\
   \displaystyle{
 -g_2(w,q)\{g_1(z,q),P_2\}g_1^{-1}(z,q)g_2^{-1}(w,q)
 }
 \end{array}
 \eq
or
\beq\label{e925}
  \begin{array}{c}
  \displaystyle{
 r^{\hbox{\tiny{BD}}}_{12}(z,w)=
 g_1(z,q)g_2(w,q)\Big({r}^{\hbox{\tiny{CM}}}_{12}(z,w)-l_2(w)\Big)g_1^{-1}(z,q)g_2^{-1}(w,q)-
 }
 \\ \ \\
   \displaystyle{
 -g_2(w,q)\{g_1(z,q),P_2\}g_1^{-1}(z,q)g_2^{-1}(w,q)\,.
 }
 \end{array}
 \eq

Finally, consider the normalized gauge transformation
\beq\label{e926}
  \begin{array}{c}
  \displaystyle{
 {\bar g}(z,q)=\frac{1}{(\det g(z,q))^{1/N}}\,g(z,q)\,,\qquad \det{\bar g}(z,q)=1\,.
 }
 \end{array}
 \eq
Due to (see e.g. \cite{LOZ,ZZ})
\beq\label{e927}
  \begin{array}{c}
  \displaystyle{
 \det\Xi(z,q)=c_N(\tau)\vth(z)V(q)\,,\quad V(q)=\prod\limits_{i<j}\vth(q_{ij})
 }
 \end{array}
 \eq
with some constant $c_N(\tau)$ and $\det d^0=(-1)^{N(N-1)/2}V^2(q)$,
we have
\beq\label{e928}
  \begin{array}{c}
  \displaystyle{
 \det g(z,q)=c_N(\tau)\frac{\vth(z)}{V(q)}\,.
 }
 \end{array}
 \eq
 Then
\beq\label{e929}
  \begin{array}{c}
  \displaystyle{
 r^{\hbox{\tiny{BD}}}_{12}(z,w)=
 {\bar g}_1(z,q){\bar g}_2(w,q)\Big({r}^{\hbox{\tiny{CM}}}_{12}(z,w)-{\bar l}_2(w)\Big){\bar g}_1^{-1}(z,q){\bar g}_2^{-1}(w,q)-
 }
 \\ \ \\
   \displaystyle{
 -{\bar g}_2(w,q)\{{\bar g}_1(z,q),P_2\}{\bar g}_1^{-1}(z,q){\bar g}_2^{-1}(w,q)
 }
 \end{array}
 \eq
 or
\beq\label{e930}
  \begin{array}{c}
  \displaystyle{
 r^{\hbox{\tiny{BD}}}_{12}(z,w)=
 {\bar g}_1(z,q){\bar g}_2(w,q){\bar r}^{\hbox{\tiny{CM}}}_{12}(z,w){\bar g}_1^{-1}(z,q){\bar g}_2^{-1}(w,q)-
 }
 \\ \ \\
   \displaystyle{
 -{\bar g}_2(w,q)\{{\bar g}_1(z,q),P_2\}{\bar g}_1^{-1}(z,q){\bar g}_2^{-1}(w,q)\,,
 }
 \end{array}
 \eq
 so that the difference $l_2(w)$ in (\ref{e925}) and ${\bar l}_2(2)$ in (\ref{e930})
 is compensated by the difference between $\{{g}_1(z,q),P_2\}$ and $\{{\bar g}_1(z,q),P_2\}$.


\subsection*{Acknowledgments}



We are grateful to K. Atalikov, A. Levin, M. Olshanetsky and A. Zabrodin for discussions.

This work was performed at the Steklov International Mathematical Center and supported by the Ministry of Science and Higher Education of the Russian Federation (agreement no. 075-15-2022-265).


\begin{small}

\end{small}

\end{document}